\begin{document}

\title{On Indexing and Compressing Finite Automata}

\author{Nicola Cotumaccio\inst{1} \and Nicola Prezza\inst{2}\thanks{Corresponding author.}}
\institute{Gran Sasso Science Institute, L'Aquila, Italy, \email{nicola.cotumaccio@gssi.it}
\and
Luiss Guido Carli University, Rome, Italy, \email{nprezza@luiss.it}
}

\maketitle

\begin{abstract}
An index for a finite automaton is a powerful data structure that supports locating paths labeled with a query pattern, thus solving pattern matching on the underlying regular language. The problem is hard in the general case: a recent conditional lower bound suggests that, in the worst case, deciding \emph{at query time} whether a pattern of length $m$ belongs to the substring closure of the language accepted by $\mathcal A$ requires $\Omega(m\cdot |\mathcal A|)$ time. On the other hand, Gagie et al. [TCS 2017] introduced a subclass of automata that allow an optimal $\tilde O(m)$-time solution based on prefix-sorting the states in a total order. 
In this paper, we  solve the long-standing problem of indexing arbitrary finite automata, matching the above bounds. Our solution consists in finding a \emph{partial} co-lexicographic order of the states and proving, as in the total order case, that states reached by a given string form \emph{one interval} on the partial order, thus enabling indexing. We provide a lower bound stating that such an interval requires $O(p)$ words to be represented, $p$ being the order's width (i.e. the size of its largest antichain). Indeed, we show that $p$ determines the complexity of several fundamental problems on finite automata:
\begin{enumerate}
\item[(i)]  Letting $\sigma$ be the alphabet size, we provide an encoding for NFAs using $\lceil\log \sigma\rceil + 2\lceil\log p\rceil + 2$ bits per transition and a smaller encoding for DFAs using $\lceil\log \sigma\rceil + \lceil\log p\rceil + 2$  bits per transition. This is achieved by generalizing the Burrows-Wheeler transform to arbitrary automata. 
\item[(ii)] We show that indexed pattern matching can be solved in $\tilde O(m\cdot p^2)$ query time on NFAs.
\item[(iii)] We provide a polynomial-time algorithm to index DFAs, while matching the optimal value for $ p $. On the other hand, we prove that the problem is NP-hard on NFAs.
\item[(iv)] We show that, in the worst case, the classic powerset construction algorithm for NFA determinization generates an equivalent DFA of size $2^p(n-p+1)-1$, where $n$ is the number of NFA's states. 
\end{enumerate}
Contribution (i) provides a new compression paradigm for labeled graphs. Contributions (ii)-(iii) solve the regular language indexing problem, notably with a polynomial-time solution for DFAs. Contribution (iv) implies a new FPT analysis for the complexity of classic algorithms on automata, including membership and equivalence (the latter being PSPACE-complete when input automata are NFAs).
\end{abstract}

\newpage

\clearpage
\setcounter{page}{1}

\section{Introduction}

Sorting is arguably one of the most basic and, at the same time, powerful techniques when it comes to searching and compressing data: for instance, a sorted list of integers supports fast membership queries and is more compressible than any of its permutations. One of the major algorithmic breakthroughs of the last two decades is that this simple observation holds also in the string processing domain: in the year 2000, Ferragina and Manzini \cite{ferragina2000opportunistic} and Grossi and Vitter \cite{grossi2000} showed independently that the list of co-lexicographically sorted prefixes of a string\footnote{More precisely, their ending positions in the string. The original work moreover considered the lexicographically-sorted suffixes; in this work we adopt this symmetric point of view which is easier to generalize to finite automata.} can be used to support fast pattern matching queries (that is, counting and locating occurrences of a pattern in the string) while \emph{simultaneously} being compressed to the string's entropy. 
Subsequent works demonstrated that prefix sorting could be extended to nonlinear structures as well. In the year 2005, Ferragina et al. \cite{Ferragina2005} showed that the nodes of a labeled tree could be arranged in the co-lexicographic order of the paths connecting them to the root, and that the resulting sorted list could be used to index and compress the tree. Two years later, Mantaci et al. \cite{mantaci2007extension} applied the same principle to sets of strings. This fascinating journey countinued with the works of Bowe et al. \cite{BOSS} and Sir\'en et al. \cite{GCSA}, who extended the class of indexable graphs to (a generalization of) de Bruijn graphs. 
More recently, Gagie et al. \cite{GAGIE201767} showed a framework capturing all the above techniques in a unified theory: Wheeler graphs. 
The idea underlying this universal framework is to extend the prefix-sorting axioms from strings to graphs: node pairs are sorted by their incoming labels and, if the labels are equal, by their predecessors (i.e. the order propagates forward when following equally-labeled pairs of edges). 
While this class of graphs admits essentially-optimal indexing and compression strategies, it has a fundamental limitation: very few labeled graphs admit a total order of their nodes satisfying the above prefix-sorting axioms. As a matter of fact, languages recognized by finite automata whose state transition is a Wheeler graph are very simple: they are star-free and closed only by intersection, and every Wheeler NFA admits an equivalent Wheeler DFA of linear size \cite{alanko20regular,alanko2020wheeler} (compare this result with the exponential blow-up of the general case).

In this paper, we finally generalize prefix sorting to arbitrary finite automata. Our technique allows us to index and compress any finite automaton by exploiting its inherent \emph{sortability}, matches recent lower- and upper- bounds \cite{equi2019,equi2020conditional,GAGIE201767}, and has unexpected deep consequences in automata theory. 
In order to fully appreciate the contribution of our paper, it is instructive to consider the well-studied  problems of compressing permutations and adaptive sorting \cite{BARBAY2013109,barbayLRM,yehuda1998partitioning}. One of the most powerful techniques to achieve the former goal is to exploit the \emph{sortedness} of the permutation. While not all permutations of $[1,n]$ are totally sorted (in fact, only one permutation has this property: $1,2,\dots, n$), one can decompose an arbitrary permutation into monotone subsequences and compress them independently. The same idea can be applied to sorting integers: by decomposing an integer sequence into monotone subsequences, one can exploit the sortedness of the input in order to obtain a faster \emph{adaptive} sorting algorithm. The new  paradigm presented in this paper can be seen as an extension of the above ideas to finite automata: in our case, we sort the automaton's states according to the co-lexicographic order of the corresponding language's prefixes.
The key observation that we provide, needed for this idea to work properly, is that, while not all finite automata admit a \emph{total} co-lexicographic order of their states, a \emph{partial} order always exists. We show that the similarity between this order and a total one (more precisely, the order's width) is the key parameter that dictates how efficiently many fundamental problems on finite automata can be solved. 

\subsection{Our Results}

Given a finite automaton, we define an order $\leq$ of its states reflecting the co-lexicographic order of the strings read from the source. While a restricted class of automata admits a total order of this kind \cite{GAGIE201767,alanko2020wheeler}, in general $\leq$ is partial. Given such an order, we show that the states reached by a path labeled with a given string form \emph{one interval} (that is, a convex set) on the partial order. This is exactly what enables indexing and compression when the order is total \cite{GAGIE201767}. 
%Indeed, we show that this is true also on the partial orders of any NFA. 
Let $p$ be the order's width, i.e. the size of its largest antichain. We start with a lower bound stating that intervals on the partial order require $\Theta(p)$ words each to be represented. 
While this implies that handling intervals is a bottleneck, we observe that a recent graph-indexability lower bound of Equi et al. \cite{equi2020conditional,equi2020graphs} indicates that a slow-down of this kind is inevitable.
We call an automaton \emph{$p$-sortable} if there exists a co-lexicographic partial order of its states of width at most $p$. We show that the parameter $p$ is an important complexity measure for several fundamental problems on NFAs. 

To begin with, we show that $p$ is a valid compressibility measure. 
We generalize the Burrows-Wheeler transform (BWT) \cite{burrows1994block} to NFAs, and prove that this transformation can be stored using $\lceil \log \sigma \rceil + 2\lceil\log p\rceil + 2$ bits per transition, where $\sigma$ is the alphabet's size. Surprisingly, we show that DFAs admit a smaller encoding: $\lceil \log \sigma \rceil + \lceil\log p\rceil + 2$ bits per transition.
We further show that this transform supports pattern matching as well: in Theorem \ref{th:FM-index}, we generalize the successful FM-index \cite{ferragina2000opportunistic} to NFAs. This solves the major open problem of indexing arbitrary finite automata. Our index uses $\lceil \log \sigma \rceil + \lceil\log p\rceil + 2$ bits per transition and supports counting the states reached by a query pattern $P\in \Sigma^m$ in time $O(m\cdot p^2 \cdot \log(p\cdot \sigma))$. 
%This result extends the class of automata that can be searched in optimal time (up to polylogarithmic factors) to all $(\log^{O(1)}n)$-sortable automata. 
Before our paper, only Wheeler automata \cite{GAGIE201767} (i.e. the case $p=1$) admitted an efficient solution for this problem.
In the worst case, our query time matches (up to a logarithmic factor) the lower bound $\Omega(m\cdot |\mathcal A|)$ of Equi et al. \cite{equi2020graphs,equi2020conditional}.

We proceed by showing that parameter $p$ also determines the "amount of nondeterminism" in NFAs: in Theorem \ref{th:powerset} we prove that the classic powerset algorithm for NFA determinization, when run on a $p$-sortable NFA with $n$ states, generates an equivalent DFA with at most $2^p(n-p+1)-1$ states.
This result has surprising implications. For instance, it shows that the PSPACE-complete NFA equivalence problem \cite{stockmeyer1973word} is fixed-parameter tractable with respect to $p$.
%: the classic algorithm based on building the minimum equivalent DFA runs in $O(2^p(n-p+1)n^2\sigma)$ time.
Similarly, our bound implies a better analysis of the powerset-based membership algorithm for NFAs.
%of a string of length $m$ in the language recognized by an NFA with $n$ states can be done in $O(2^p(n-p+1)n^2\sigma + m)$ time by converting the NFA into a DFA. For small $p$ and $n$, this improves the classic $O(m\cdot n)$-time algorithm based on dynamic programming.

Motivated by the above applications of our new technique, we conclude the paper by focusing on the problem of determining, given a pair $(\mathcal A,p)$, whether the NFA $\mathcal A$ is $p$-sortable. 
A recent result of Gibney and Thankachan \cite{gibney19} implies that the problem is NP-hard in the general case. On the other hand, in Corollary \ref{cor:poly-DFA} we provide a polynomial-time algorithm for DFAs. Notably, this result enables indexing arbitrary regular languages in polynomial time with the optimal value for $p$, provided that the input language is expressed as a DFA. 

Our approach opens up promising lines of research both in regular language theory and in compressed indexing. Our parameterization defines a complete hierarchy of sub-regular languages, i.e. those accepted by $p$-sortable NFAs. Moreover, it seems natural to expect that other fundamental problems such as NFA minimization \cite{Malcher2004} and regular expression matching \cite{backurs2016regular} admit efficient algorithms for small values of $p$ (similarly to what we proved for compression, indexing, equivalence, and membership). 
Further lines of research include determining the complexity of finding a co-lexicographic order of provably small width, 
recognizing $p$-sortable NFAs for a fixed $p$ (for $p=1$ the problem is known to be NP-complete \cite{gibney19}), and refining the lower bounds of Equi et al. \cite{equi2019,equi2020conditional,equi2020graphs} for the graph indexing problem as a function of $p$.

The first ten pages of this manuscript contain a concise description of all our contributions and can be followed by non-specialists. 
The detailed proofs of all claims can be found in the appendix.

\section{Notation}

A nondeterministic finite automaton (NFA) is a 5-tuple $ (Q, E, \Sigma, s, F) $ where $ Q $ is the set of states, $ E \subseteq Q \times Q \times \Sigma $ is
%the set of labeled edges (equivalently, 
the automaton's transition function, $ \Sigma $ is the alphabet, $ s \in Q $ is the initial state and $ F \subseteq Q $ is the set of final states. 
We assume the alphabet to be effective: each character labels at least one edge and, in particular, $\sigma = |\Sigma| \leq |E|$.
A deterministic finite automaton (DFA) is an NFA such that each state has at most one outgoing edge labeled with a given character.
%(this would make indexing useless, since otherwise any string from $\Sigma^*$ could be read as a path on the automaton).
We make the same assumptions of Alanko et al. \cite{alanko20regular}: (i) We assume that our input NFAs are \emph{input-consistent}, that is, all edges reaching the same state have the same label. This is required for indexing and is not restrictive since input-consistency can be forced by replacing each state with $ |\Sigma | $ copies of itself without changing the accepted language.
%and obtain an input-consistent automaton that recognizes the same language. 
(ii) We assume that all states are reachable from the initial state.
%(removing the unreachable state does not change the accepted language), 
(iii) We assume that the (unique) initial state has no incoming edges.
%(which is not restrictive, because it suffices to duplicate the initial state, obtaining an NFA or a DFA that accepts the same language) 
(iv) We assume that every state is either final or it allows to reach a final state. (v) We do not require each state to have an outgoing edge for all possible labels.
%, if it is not final 
%(otherwise we can remove every state which does not satisfy this property without changing the accepted language). 
It is not hard to see that these assumptions are not restrictive, since any automaton can modified to meet these requirements while preserving the accepted language.
We assume that on $ \Sigma $ there is a fixed total order $ \le $, and that strings in $ \Sigma^* $ are co-lexicographically ordered by $ \le $. We write $ a \le b $ when the string $ a $ is co-lexicographically smaller than or equal to the string $ b $, and we write $ a < b $ when $ a \le b $ and $ a \not = b $.
To simplify our notation, we denote by $\lambda(u)$ the (uniquely determined) label of all incoming edges of node $u$. For the initial state $s$, we write $\lambda(v) = \# \notin \Sigma$ and we assume $\# < c$ for all $c\in\Sigma$. To make notation more compact, we will sometimes write $ (u, v) $ for $ (u, v, a) $, because it must be $ a = \lambda (v) $.

A partial order $ \le $ on a set $ V $ is a reflexive, antisymmetric and transitive relation on $ V $. We write $ u < v $ when $ u \le v $ and $ u \not = v $. In particular, at most one between $ u < v $ and $ v < u $ can be true. 
Elements $u$ and $v$ are \emph{$ \le $-comparable} if $u \leq v$ or $v\leq u$ holds. We write $u\ \|\ v $ when $ u $ and $ v $ are not $ \le $-comparable (note that $\|$ is a symmetric relation). On a partial order $\leq$, for every $ u, v \in V $ exactly one of the following is true: (i)  $u=v$, (ii) $u<v$, (iii) $v<u$, or (iv)  $u\ \|\ v$.

If $ V' \subseteq V $, then we say that $ U \subseteq V' $ is a \emph{$\le_{V'} $-interval} if for every $ u, v, z \in V' $ such that $ u < v < z $ and $ u, z \in U $ we have $ v \in U $. In particular, we say that $ U \subseteq V $ is a \emph{$\le $-interval} if it is a $ \le_V $-interval. 
%\textcolor{orange}{Ho dovuto modicare la definizione di intervallo in questo modo altrimenti c'erano delle incoerenze. Ora va bene. Elimina pure questo commento} %\textcolor{red}{forse intuisco la ragione: è perchè un ordine parziale ammette diverse chain decomposition, quindi dire semplicemente $\leq$-interval non implica che l'intervallo stia in una sola catena di una partizione fissata? uhm forse meglio mettere questa definizione dopo quella di $\le$-chain e qui parlare direttamente di catene, altrimenti non si capisce la ragione di una definizione così complicata. Inoltre, cambierei la notazione "$\le_{V'} $-interval" in "$(\le, V')$-interval", altrimenti sembra che $\le_{V'}$ sia un ordine nuovo diverso da $\leq$ (ma forse è quello che vuoi? $\le_{V'}$ è l'ordine totale su $V'$?).} 
%\textcolor{blue}{La notazione $(\le, V')$ avevo provata a usarla ma forse è pesante. L'idea è che $\le_{V'} $ è una restrizione. Forse per evitare perplessità del lettore possiamo aggiungere la seguente frase: "Note that $ U \subseteq V' $ is a $ \le_{V'} $ interval if and only if $ U $ is a $ \le $-interval in $ (V', \le) $."} \textcolor{green}{no ok, può andare bene così. Al contrario, dobbiamo riassumere il piu' possibile questa sezione}

A subset $ Z \subseteq V $ is a \emph{$ \le $-chain} if $ (Z, \le) $ is a total order. A partition $ \{V_i \}_{i = 1}^m $ of $ V $ is a \emph{$ \le $-chain decomposition} if $ V_i $ is a $ \le $-chain, for every $ i = 1, \dots, m $. The \emph{$ \le $-width} of $ V $ (equivalently, the width of $(V,\leq)$ or simply the width of $\leq$ when $V$ is clear from the context) is the size of its largest antichain, i.e. the largest subset $A = \{u_1, \dots, u_p\} \subseteq V$ such that $u_i\ \|\ u_j$ for all $1\leq i< j \leq p$. Dilworth's theorem \cite{dilworth2009decomposition} states that the width of $(V,\leq)$ coincides with the cardinality of a smallest $ \le $-chain decomposition of $ V $.

Our results hold in the word RAM model with word size $\Theta(\log n)$ bits. Logarithms are base 2.

\section{Extending Prefix Sorting to Arbitrary Finite 
Automata}

We start by extending the notion of co-lexicographic order to the states of an arbitrary finite automaton. Crucially, note that in Axiom 2 the implication follows the edges backwards (instead of forward as done in \cite{GAGIE201767}). %\textcolor{orange}{Non so se sia una vera differenza, nei grafi di Wheeler l'ordina si propaga in avanti ma anche in indietro, fa lo stesso} \textcolor{blue}{hai ragione, non si capisce: volevo sottolineare che l'implicazione va nell'altra direzione. Come possiamo scriverlo?}. 
The reason for this will be made clear after the definition. 

\begin{definition}\label{def:co-lex order}
Let $\mathcal A = (Q, E, \Sigma, s, F) $ be an NFA. A \emph{co-lexicographic order} of $ \mathcal A $ is a partial order $ \le $ on $ Q $ that satisfies the following two axioms:
\begin{enumerate}
    \item (Axiom 1) For every $ u, v \in Q $, if $\lambda(u) < \lambda(v)$, then $ u < v $ (in particular, states with no incoming edges come before all remaining states);
    \item (Axiom 2) For all edges $ (u', u), (v', v) \in E $, if $ \lambda (u) = \lambda (v) $ and $ u < v $, then $ u' \leq v' $.
\end{enumerate}
\end{definition}

It is immediate to observe that a co-lexicographic order $ \le $ is a Wheeler order (as defined in \cite{GAGIE201767}) if and only $ \le $ is total. We remind the reader that an automaton is said to be Wheeler if and only if it admits a Wheeler order \cite{GAGIE201767,alanko2020wheeler}. %\textcolor{orange}{Penso che tu voglia richiamare la definizione di automa di Wheeler, ma così sembra che stai richiamando un teorema. Meglio dire "Recall that an automaton is said to be Wheeler if and only if it admits a Wheeler order"}
The intuition behind Axiom 2 is that ensuring backward compatibility guarantees that the order is automatically not defined if predecessors cannot be unambiguously compared, 
%(\textcolor{orange}{la parte finale della tua frase era " (...) automatically not defined among incomparable states", mi sembrava poco chiara})
as observed in the following remark.

\begin{remark}\label{rem1}
Let $\mathcal A = (Q, E, \Sigma, s, F) $ be an NFA and let $ \leq $ be a co-lexicographic order of $ \mathcal A $. Let $ u, v \in Q $ such that $ u \not = v $ and $ \lambda (u) = \lambda (v) $. Then, $ u~\|~v $ if at least one of the following holds:
\begin{enumerate}
    \item There exist edges $ (u', u), (v', v) \in E $ such that $ u' ~\|~ v' $;
    \item There exist edges $ (u', u), (v', v), (u'', u), (v'', v) \in E $ such that $ u' < v'$ and $ v'' < u'' $.
\end{enumerate}

Indeed, if e.g. it were $ u < v $, then Axiom 2 would imply that in case 1 it should hold $ u' \le v' $ and in case 2 it should hold $ u'' \le v'' $ (which is forbidden by antisymmetry of $\leq$).
\end{remark}

Figure \ref{fig:2-sortable} depicts the running example that will be used throughout the paper. The automaton recognizes the regular language $\mathcal L = ab(aa)^*(bb)^*$. It can be shown (see \cite{alanko2020wheeler}) that $\mathcal L$ cannot be recognized by any Wheeler automaton. Consider the co-lexicographic order $ \le $ whose Hasse diagram is depicted in the figure.
The partial order's width is 2, and the order can be partitioned into two $ \le $-chains: this will become important later in the paper. 
%There are multiple choices for such a decomposition. 
The right part of the figure makes it clear that a possible $ \le $-chain decomposition (not the only one) is $\{\{0,1,3,6\}$, $\{4,2,5\}\}$.

\begin{figure}[h!]
\centering
\begin{subfigure}{.45\textwidth}
	\centering
	\begin{tikzpicture}[shorten >=1pt,node distance=1.6cm,on grid,auto]
	\tikzstyle{every state}=[fill={rgb:black,1;white,10}]
	
	\node[state,initial]   (q_0)                    {$0$};
	\node[state]           (q_1)  [right of=q_0]    {$1$};
	\node[state]           (q_3)  [right of=q_1]    {$3$};
	\node[state,accepting] (q_4)  [right of=q_3]    {$4$};
	\node[state,accepting] (q_2)  [below of=q_3]    {$2$};	
	\node[state]           (q_5)  [right of=q_2]    {$5$};
	\node[state,accepting] (q_6)  [right of=q_5]    {$6$};

	\path[->]
	(q_0) edge node {a}    (q_1)
	(q_1) edge node {b}    (q_2)
	(q_2) edge node {a}    (q_3)
	(q_2) edge node {b}    (q_5)
	(q_4) edge node {b}    (q_5)
	(q_4) edge [bend left] node {a}    (q_3)
	(q_3) edge [bend left] node {a}    (q_4)
	(q_5) edge [bend left] node {b}    (q_6)
	(q_6) edge [bend left] node {b}    (q_5);
	\end{tikzpicture}
\end{subfigure}
\begin{subfigure}{.45\textwidth}
	\centering
	\begin{tikzpicture}[shorten >=1pt,node distance=0.7cm,on grid,auto]
	\tikzstyle{every state}=[fill={rgb:black,1;white,10}]
	
	\node[state,color=white,text=black,inner sep=1pt,minimum size=0pt] (q_0)                   {$0$};
	\node[state,color=white,text=black,inner sep=1pt,minimum size=0pt] (q_1)  [above of=q_0]   {$1$};
	\node[state,color=white,text=black,inner sep=1pt,minimum size=0pt] (q_3)  [above of=q_1]   {$3$};
	\node[state,color=white,text=black,inner sep=1pt,minimum size=0pt] (q_4)  [right of=q_3]   {$4$};
	\node[state,color=white,text=black,inner sep=1pt,minimum size=0pt] (q_2)  [above of=q_4]   {$2$};	
	\node[state,color=white,text=black,inner sep=1pt,minimum size=0pt] (q_5)  [above of=q_2]   {$5$};
	\node[state,color=white,text=black,inner sep=1pt,minimum size=0pt] (q_6)  [left of=q_5]    {$6$};
	
	\path[->]
	(q_0) edge node {}    (q_1)
	(q_1) edge node {}    (q_3)
	(q_3) edge node {}    (q_6)
	(q_4) edge node {}    (q_2)
	(q_2) edge node {}    (q_5)
	(q_2) edge node {}    (q_6)
	(q_3) edge node {}    (q_2)
	(q_1) edge node {}    (q_4)
	(q_0) edge [bend right] node {}    (q_4);
	\end{tikzpicture}
\end{subfigure}
\caption{\textbf{Left}: automaton recognizing the non-Wheeler language $\mathcal L = ab(aa)^*(bb)^*$.  \textbf{Right}: Hasse diagram of a co-lexicographic partial order of the states.} \label{fig:2-sortable}
\end{figure}
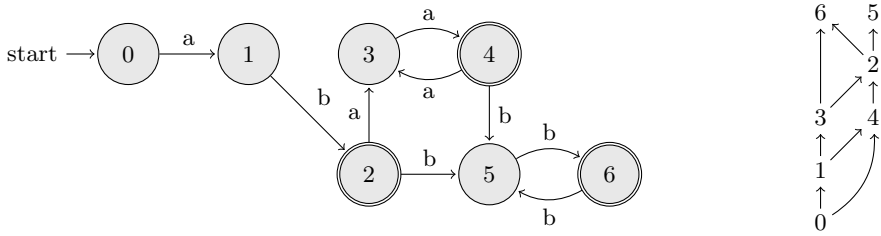

The class of Wheeler languages --- that is, the class of all regular languages recognized by some Wheeler NFA --- is rather small: for example, unary languages are Wheeler only if they are finite or co-finite. In general, Wheeler languages are not closed under union, complement, concatenation, and Kleene star \cite{alanko2020wheeler}.
On the other hand, every finite automaton admits a co-lexicographic order, because:
\begin{equation*}
\leq\ :=\ \{(u, u) \in Q \times Q\ |\ u \in Q \} \cup \{(u, v) \in Q \times Q\ |\ \lambda(u) < \lambda(v) \}.
\end{equation*}
satisfies Axiom 1 and trivially satisfies Axiom 2. This order, however, in general is not "complete" in the sense that it can possibly be expanded with new pairs. Later we will define a maximality criterion that a co-lexicographic order should have in order to be useful for indexing and compression.

\begin{definition}\label{definition2}
A \emph{co-lexicographic nondeterministic (resp. deterministic) finite automaton (CNFA, resp. CDFA)} is a 6-tuple $ (Q, E, \Sigma, s, F, \le) $ where $ (Q, E, \Sigma, s, F) $ is an NFA (resp. DFA) and $ \le $ is a co-lexicographic order of the automaton.
\end{definition}

Following the notation of Alanko et al. \cite{alanko2020wheeler,alanko20regular}, let $ \mathcal{L}({\mathcal{A}}) $ be the language accepted by $ \mathcal{A} $ and let $ Pref (\mathcal{L}({\mathcal{A}})) $ be the set of all strings in $ \Sigma^* $ that can be read on $\mathcal A$ by following some path starting from the initial state $ s $. For any $ \alpha \in \Sigma^* $, we denote by $ I_\alpha $ the set of all states that can be reached from the initial state $ s $ by following a path whose edges, when concatenated, yield $ \alpha $.

The following Lemma exhibits the nature of our ordering $\leq$ among states. 
Intuitively, we prove that $\leq$ must respect the co-lexicographic order of the strings that can be read from the initial state.

\begin{lemma}[Generalized from \cite{alanko20regular}]\label{lem:string-nodes}
Let $ \mathcal{A} = (Q, E, \Sigma, s, F, \le) $ be a CNFA. Let $ u, v \in Q $ and $ \alpha, \beta, \in Pref (\mathcal{L(A)}) $ such that $ u \in I_\alpha $, $ v \in I_\beta $ and $ \{u, v \} \not \subseteq I_\alpha \cap I_\beta $.
\begin{enumerate}
\item If $ \alpha < \beta $, then $ u ~\|~ v $ or $ u < v $.
\item If $ u < v $, then $ \alpha \prec \beta $.
\end{enumerate}
\end{lemma}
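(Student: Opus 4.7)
The plan is to prove part 1 by strong induction on $|\alpha| + |\beta|$, and then derive part 2 as an immediate contrapositive corollary. Assume for contradiction that $\alpha < \beta$ but $v < u$. For the base case $\alpha = \epsilon$, assumption (iii) forces $u = s$ and hence $\lambda(u) = \#$, so Axiom 1 gives $u < v$, a contradiction. In the inductive step, both strings are non-empty and I would split on whether $\lambda(u) = \lambda(v)$. If the two labels differ, then the last characters of $\alpha$ and $\beta$ (which are $\lambda(u)$ and $\lambda(v)$, respectively) differ, and since co-lex order is decided by the last differing character, $\alpha < \beta$ forces $\lambda(u) < \lambda(v)$; Axiom 1 then yields $u < v$, again a contradiction.

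The core of the argument is the case $\lambda(u) = \lambda(v) = a$. Writing $\alpha = \alpha' a$ and $\beta = \beta' a$, appending a common character preserves co-lex order, so $\alpha' < \beta'$. I would apply the inductive hypothesis to a suitably chosen pair $(u', v')$, where $u'$ is a predecessor of $u$ with $u' \in I_{\alpha'}$ and $v'$ is a predecessor of $v$ with $v' \in I_{\beta'}$. The main obstacle is to pick this pair so that the lemma's precondition $\{u', v'\} \not\subseteq I_{\alpha'} \cap I_{\beta'}$ still holds: otherwise, every predecessor of $u$ in $I_{\alpha'}$ would automatically lie in $I_{\beta'}$ (and symmetrically for $v$), so extending through the $a$-labeled edges would yield $u \in I_\beta$ and $v \in I_\alpha$, contradicting $\{u, v\} \not\subseteq I_\alpha \cap I_\beta$. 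Hence some valid pair exists.

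With such a pair $(u', v')$ in hand, the induction hypothesis gives $v' \not< u'$, while Axiom 2 applied to $v < u$ together with $\lambda(v) = \lambda(u)$ forces $v' \le u'$; combining these yields $u' = v'$, which places both elements in $I_{\alpha'} \cap I_{\beta'}$ and contradicts the chosen precondition. For part 2, if $u < v$ but $\alpha \not\prec \beta$, then either $\alpha = \beta$ (placing $u, v \in I_\alpha \cap I_\beta$, against the hypothesis) or $\beta \prec \alpha$, in which case part 1 applied with the roles of $u,v$ and $\alpha,\beta$ swapped forces $v \,\|\, u$ or $v < u$, contradicting $u < v$. The delicate point throughout is the transfer of the ``$\not\subseteq$'' hypothesis from $(\alpha, \beta)$ to $(\alpha', \beta')$, which is precisely what enables the descent step.
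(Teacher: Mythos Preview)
Your argument is correct and follows essentially the same route as the paper's proof: induction reducing to predecessors along the common last letter, with Axiom~1 handling the unequal-label case and Axiom~2 plus the inductive hypothesis handling the equal-label case. The only cosmetic differences are that the paper inducts on $\min(|\alpha|,|\beta|)$ rather than $|\alpha|+|\beta|$, and it simply picks \emph{any} predecessor pair $(u',v')$ and observes directly that this specific pair already satisfies $\{u',v'\}\not\subseteq I_{\alpha'}\cap I_{\beta'}$ (since otherwise $u'\in I_{\beta'}$ and $v'\in I_{\alpha'}$ immediately give $u\in I_\beta$ and $v\in I_\alpha$), rather than arguing that ``some valid pair exists''; your version is slightly more elaborate but equally valid.
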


The reason why Wheeler automata admit an efficient indexing mechanism lies in two key observations: (i) the set of states reached by a path labeled with a given string $\alpha$ forms \emph{an interval}, and (ii) on total orders an interval can be expressed with $O(1)$ words by specifying its endpoints. We now show that the interval property holds true also for partial orders.

\begin{lemma}[Path coherence, generalized from \cite{GAGIE201767}]\label{lem5}
Let $ \mathcal{A} = (Q, E, \Sigma, s, F, \le) $ be a CNFA. Let $ \alpha \in \Sigma^*$, and let $ U $ be a $ \le $-interval of states. Then, the set $ U' $ of all states in $ Q $ that can be reached from $ U $ by following edges whose labels, when concatenated, yield $ \alpha $, is still a $ \le $-interval.
\end{lemma}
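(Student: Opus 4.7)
The plan is to reduce to the single-character case by induction on $|\alpha|$. The base case $\alpha = \varepsilon$ is trivial since $U' = U$; for the inductive step, if $\alpha = \beta c$ with $c \in \Sigma$, then the set $U''$ of states reached from $U$ by paths spelling $\beta$ is a $\le$-interval by the inductive hypothesis, and $U'$ is obtained from $U''$ by following a single edge labeled $c$. So it suffices to prove the lemma when $\alpha = a$ is a single character.

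For the single-character case, fix $u', v', z' \in Q$ with $u' < v' < z'$ and $u', z' \in U'$; I need to show $v' \in U'$. By the definition of $U'$, there exist $x_u, x_z \in U$ with $(x_u, u', a), (x_z, z', a) \in E$, so by input-consistency $\lambda(u') = \lambda(z') = a$. The first key step is to establish that $\lambda(v') = a$ as well. For this I use the implication $u < v \Rightarrow \lambda(u) \le \lambda(v)$, which follows from Axiom~1 together with antisymmetry: indeed, if $\lambda(u) > \lambda(v)$ then Axiom~1 would give $v < u$, contradicting $u < v$. Applying this to $u' < v'$ and to $v' < z'$ yields $a = \lambda(u') \le \lambda(v') \le \lambda(z') = a$, forcing $\lambda(v') = a$. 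Since $v' \ne s$, there exists some edge $(x_v, v', a) \in E$; pick any such $x_v$.

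Now I apply Axiom~2 twice. From $u' < v'$ together with $\lambda(u') = \lambda(v') = a$, the axiom gives $x_u \le x_v$; from $v' < z'$ together with $\lambda(v') = \lambda(z') = a$, it gives $x_v \le x_z$. Combining, $x_u \le x_v \le x_z$ with $x_u, x_z \in U$. A short case analysis on whether $x_u = x_v$, $x_v = x_z$, or $x_u < x_v < x_z$ strictly shows $x_v \in U$ in every case --- the strict case invoking the $\le$-interval property of $U$. Thus $v'$ has an incoming edge from $U$ labeled $a$, so $v' \in U'$, as required.

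The conceptual point is that Axiom~2 is stated ``backwards'' precisely so that it transports the order from destinations back to sources, which is exactly the direction needed to push the interval property through one transition. The only minor obstacle I anticipate is reconciling the non-strict inequalities produced by Axiom~2 with the strict inequalities in the definition of a $\le$-interval, which the case analysis above handles cleanly.
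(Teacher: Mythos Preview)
Your proof is correct and follows essentially the same approach as the paper: induction on $|\alpha|$, reduction to a single character, then using Axiom~1 to pin down $\lambda(v')=a$ and Axiom~2 to sandwich the predecessor between two elements of the inductive interval. Your explicit case analysis distinguishing $x_u=x_v$, $x_v=x_z$, and $x_u<x_v<x_z$ is a bit more careful than the paper, which simply writes $u'\le v'\le z'$ and concludes $v'\in U''$ without commenting on the non-strict cases.
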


\begin{corollary}\label{cor:1}
Let $ \mathcal{A} = (Q, E, \Sigma, s, F, \le) $ be a CNFA. Let $ \alpha \in \Sigma^*$. Then, $ I_\alpha $ is a $ \le $-interval.
\end{corollary}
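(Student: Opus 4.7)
The plan is to derive this statement as a direct specialization of the Path Coherence Lemma (Lemma \ref{lem5}), applied to the trivial starting interval $U = \{s\}$ and the given string $\alpha$. Since $I_\alpha$ is by definition the set of states reachable from the initial state by some path whose edge labels concatenate to $\alpha$, this coincides exactly with the image set $U'$ produced by Lemma \ref{lem5} starting from $\{s\}$.

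The only thing I actually need to verify is that $\{s\}$ qualifies as a $\le$-interval so that Lemma \ref{lem5} is applicable. This is immediate from the definition of interval: the condition requires elements $u, v, z$ with $u < v < z$ and $u, z \in \{s\}$; this forces $u = z = s$, which contradicts the strict inequality $u < z$ obtained by transitivity. Hence no such triple exists and the interval condition holds vacuously. Invoking Lemma \ref{lem5} on $U = \{s\}$ with the string $\alpha$ then yields that $I_\alpha$ is a $\le$-interval.

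The only remaining subtlety is the degenerate case $\alpha = \epsilon$, in which no edges are traversed and $I_\epsilon = \{s\}$; this is already handled by the singleton observation above. I do not expect any genuine obstacle in this proof, as the corollary is essentially a packaging of Lemma \ref{lem5} in the form that the subsequent indexing machinery (and the FM-index construction announced in the introduction) will consume: reading any prefix $\alpha$ from the source yields a single convex region of states in the partial order, which is precisely the structural property needed to represent reached state-sets succinctly by their endpoints in each chain of a chain decomposition.
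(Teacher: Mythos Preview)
Your proposal is correct and matches the paper's own main proof essentially verbatim: the paper also picks $U=\{s\}$ in Lemma~\ref{lem5} and observes $U'=I_\alpha$. The paper additionally offers an alternative direct proof via Lemma~\ref{lem:string-nodes} (deriving a contradiction $\alpha<\beta$ and $\beta<\alpha$ from a hypothetical $v$ with $u<v<z$, $u,z\in I_\alpha$, $v\notin I_\alpha$), but your route is the primary one.
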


\begin{proof}
Pick $ U = \{s\} $ in Lemma \ref{lem5}. Then $ U' = I_\alpha $. For an alternative proof, see Appendix \ref{app:cor1}. \qed
\end{proof}

As we will see, the above results imply that indexing can be extended to arbitrary finite automata by updating \emph{one} $\leq$-interval for each character of the query pattern. This however does not mean that, in general, indexing can be performed efficiently as on Wheeler automata: as we show next, in general a $\leq$-interval cannot be expressed in constant space.

\begin{lemma}\label{lem:lower bound}
The following hold: (1) Any partial order $(V,\leq)$ of width $ p $  has at least $2^p$ distinct $\leq$-intervals.
(2) For any size $n=|V|$ and any $1\leq p \leq n$, there exists a partial order $(V,\leq)$ of width $p$ with at least $(n/p)^p$ distinct $\leq$-intervals.
\end{lemma}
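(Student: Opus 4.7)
The plan is to handle (1) via a maximum antichain and (2) via a disjoint union of chains.

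For (1), since the width of $(V,\leq)$ is $p$, there exists an antichain $A = \{a_1, \dots, a_p\} \subseteq V$. I claim that every subset $S \subseteq A$ is itself a $\leq$-interval. The interval condition requires that whenever $u, v, z \in V$ satisfy $u < v < z$ with $u, z \in S$, then $v \in S$. But $u, z \in S \subseteq A$ lie in an antichain, so they are either equal or $\leq$-incomparable; in particular $u < z$ is impossible unless $u = z$, which would contradict $u < v < z$. Hence the antecedent of the interval condition is vacuous and $S$ is an interval. Since there are $2^p$ distinct subsets of $A$, this yields $2^p$ distinct $\leq$-intervals.

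For (2), I would construct $(V,\leq)$ explicitly as a disjoint union of $p$ totally ordered chains whose sizes sum to $n$, with elements from distinct chains declared $\leq$-incomparable. Concretely, writing $k = \lfloor n/p \rfloor$ and $s = n - pk \in \{0, 1, \dots, p-1\}$, I take $s$ chains of length $k+1$ and $p - s$ chains of length $k$. Any antichain contains at most one element of each chain, and the $p$ minima form an antichain, so the width is exactly $p$. The key observation is that every chain of comparabilities $u < v < z$ lies entirely inside a single chain $C_i$, and therefore a subset $U \subseteq V$ is a $\leq$-interval if and only if $U \cap C_i$ is a contiguous subset of $C_i$ for every $i$.

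The number of contiguous subsets of a chain with $\ell$ elements (including the empty set) is $\ell(\ell+1)/2 + 1 \geq \ell + 1$, and the choices are independent across chains, so the total number of $\leq$-intervals is at least $\prod_{i=1}^p (\ell_i + 1) \geq (k+1)^p \geq (n/p)^p$, where $\ell_i$ denotes the length of $C_i$ and the last inequality uses $\lfloor n/p \rfloor + 1 \geq n/p$. The only real obstacle is verifying the biconditional characterisation of intervals in the disjoint-chains construction; once this is settled, the counting is routine. Part (1) is essentially immediate once one notices the vacuity of the interval condition on any subset of an antichain.
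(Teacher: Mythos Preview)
Your proof is correct and follows essentially the same approach as the paper: part (1) via subsets of a maximum antichain, part (2) via a disjoint union of $p$ mutually incomparable chains and counting products of per-chain intervals. Your treatment of the case $p \nmid n$ (distributing the remainder and using the bound $\ell(\ell+1)/2 + 1 \ge \ell+1$ together with $\lfloor n/p\rfloor + 1 \ge n/p$) is in fact a bit more careful than the paper's, which implicitly assumes equal chain lengths.
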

\begin{proof}[Sketch] The general bound $2^p$ follows immediately from the fact that any subset of an antichain is a $\leq$-interval. For the bound $(n/p)^p$, consider an order composed by $p$ mutually-incomparable chains. Every combination of intervals on the chains forms a distinct $\leq$-interval. \qed
\end{proof}

\begin{remark}\label{rem:lower bounds bit}
Given an NFA $\mathcal A$ with $n$ states and a co-lexicographic order $\leq$ of width $p$ of $\mathcal A$, Lemma \ref{lem:lower bound} (1) implies an information-theoretic lower bound of $p$ bits for expressing a $\leq$-interval. By Lemma \ref{lem:lower bound} (2), this bound increases to $\Omega(p\log(n/p))$ bits in the worst case. This means that, up to (possibly) a logarithmic factor, in the word RAM model $\Omega(p)$ time is needed to manipulate one $\leq$-interval.
\end{remark}

The above remark motivates the following strategy. Letting $p$ be the width of a partial order $ \le $, by Dilworth's theorem \cite{dilworth2009decomposition} there exists a $ \le $-chain decomposition $ \{Q_i\}_{i=1}^p $ of $Q$ into $p$ chains. Then, the following lemma implies that a $ \le $-interval can be encoded by at most $p$ intervals, each contained in a distinct chain, using $O(p)$ words. This encoding is essentially optimal by Remark \ref{rem:lower bounds bit}. 
%From a computational point of view, Lemma \ref{lem5} is a powerful tool for indexing, because if we have a $ \le $-chain decomposition $ \{Q_i \}_{i = 1}^p $, the following lemma shows that a $\leq$-interval splits into (at most) $ p $ intervals, each contained in a different chain. This is profitable because a chain is a totally ordered set, so an interval contained into a chain can be easily encoded, as we will see.

\begin{lemma}
Let $ (V, \le) $ be a partial order, and let $ U $ be a $ \le $-interval. Let $ \{V_i\}_{i = 1}^p $ be a $ \le $-chain decomposition of $ V $. Then, $ U $ is the disjoint union of $ p $ (possibly empty) sets $ U_1, \dots, U_p $, where $ U_i $ is a $ \le_{V_i} $-interval, for $ i = 1, \dots, p $.
\end{lemma}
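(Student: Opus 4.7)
The plan is to build the decomposition in the only natural way and then verify the two claimed properties directly from the definitions.

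First I would define, for each $i = 1, \dots, p$, the set $U_i := U \cap V_i$. Since $\{V_i\}_{i=1}^p$ is a partition of $V$, the sets $\{U_i\}_{i=1}^p$ are automatically pairwise disjoint, and their union equals $U \cap \bigcup_i V_i = U \cap V = U$. This handles the ``$U$ is the disjoint union of $U_1, \dots, U_p$'' part for free, so the only real content is showing that each $U_i$ is a $\leq_{V_i}$-interval.

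For this, I would just unfold the definition of $\leq_{V_i}$-interval. Fix $i$ and pick any $u, v, z \in V_i$ with $u < v < z$ and $u, z \in U_i$. Then $u, z \in U$ by the inclusion $U_i \subseteq U$, and since $V_i \subseteq V$, the chain of inequalities $u < v < z$ also holds in $(V, \leq)$ with $v \in V$. Because $U$ is a $\leq$-interval in $V$, we conclude $v \in U$. Combined with $v \in V_i$, this gives $v \in U \cap V_i = U_i$, which is exactly the $\leq_{V_i}$-interval condition.

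I do not anticipate any obstacle: the statement is essentially the observation that ``interval'' is preserved under restriction to an arbitrary subset, and the chain decomposition hypothesis is used only to ensure that the $V_i$ form a partition so that the $U_i$ are disjoint and cover $U$. No use of Dilworth's theorem, antichains, or the specific chain structure of the $V_i$ is needed beyond the partition property. The whole argument fits in a few lines once $U_i$ is defined as $U \cap V_i$.
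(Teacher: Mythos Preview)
Your proposal is correct and follows exactly the same approach as the paper: define $U_i := U \cap V_i$, observe that the partition property gives the disjoint-union claim immediately, and then verify the $\le_{V_i}$-interval property by the same three-point argument. Your additional remark that the chain structure of the $V_i$ is not actually used (only the partition property) is accurate.
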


\begin{proof}
Define $ U_i := U \cap V_i $. Then, $ U $ is the disjoint union of all $ U_i $'s, because $ \{V_i\}_{i = 1}^p $ is a partition. Hence we just have to prove that $ U_i $ is a $ \le_{V_i} $-interval.
Let $ u, v, z \in V_i $ be such that $ u < v < z $ and $ u, z \in U_i $. In particular $ u, z \in U $, so $ v \in U $ (because $ U $ is a $ \le $-interval) and we conclude $ v \in U_i $. \qed
\end{proof}

We can now restate Lemma \ref{lem5} and Corollary \ref{cor:1} as follows.

\begin{lemma}\label{lem:path coherence - chains}
Let $ \mathcal{A} = (Q, E, \Sigma, s, F, \le) $ be a CNFA, and let $ \{Q_i \}_{i = 1}^p $ be a $ \le $-chain decomposition of $ Q $. Let $ \alpha \in \Sigma^*$, and let $ U $ be a $ \le $-interval of states. Then, the set $ U' $ of all states in $ Q $ that can be reached from $ U $ by following edges whose labels, when concatenated, yield $ \alpha $, is the disjoint union of $ p $ (possibly empty) sets $ U'_1, \dots, U'_p $, where $ U'_i $ is a $ \le_{Q_i} $-interval, for $ i = 1, \dots, p $.
\end{lemma}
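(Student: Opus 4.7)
The statement is essentially a direct composition of two results already established in the excerpt, so the plan is short. First I would apply Lemma \ref{lem5} (path coherence) to the interval $U$ and the string $\alpha$: this immediately yields that the reached set $U'$ is itself a $\le$-interval of $Q$. At this point the dynamics of the automaton have been fully used, and what remains is a purely order-theoretic statement about $U'$.

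Second, I would invoke the unnamed lemma stated immediately before Lemma \ref{lem:path coherence - chains}, applied to the partial order $(Q,\le)$, the chain decomposition $\{Q_i\}_{i=1}^p$, and the $\le$-interval $U'$. This gives the disjoint decomposition $U' = \bigsqcup_{i=1}^p U'_i$ with $U'_i := U' \cap Q_i$ being a $\le_{Q_i}$-interval for each $i$. Combining the two applications finishes the proof.

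There is essentially no obstacle here: the work has been front-loaded into Lemma \ref{lem5} and into the chain-decomposition lemma. The only thing to be slightly careful about is to make sure that the \emph{same} notion of interval is used in both steps — Lemma \ref{lem5} produces a $\le$-interval of $Q$ (not of some subset), which is exactly the input required by the chain-decomposition lemma, so the two statements interface cleanly. Thus the proof reduces to a one-line citation of both lemmas in sequence.
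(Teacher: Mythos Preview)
Your proposal is correct and matches the paper's approach exactly: the paper presents Lemma~\ref{lem:path coherence - chains} explicitly as a restatement of Lemma~\ref{lem5} combined with the preceding chain-decomposition lemma, without giving a separate proof. Your observation that the two lemmas interface cleanly (Lemma~\ref{lem5} outputs a $\le$-interval of $Q$, which is precisely the input needed by the decomposition lemma) is the only point worth checking, and it holds.
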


\begin{corollary}\label{cor:path-coherence}
Let $ \mathcal{A} = (Q, E, \Sigma, s, F, \le) $ be a CNFA, and let $ \{Q_i \}_{i = 1}^p $ be a $ \le $-chain decomposition of $ Q $. Let $ \alpha \in Pref (\mathcal{L(A)}) $. Then, $ I_\alpha $ is the disjoint union of $ p $ (possibly empty) sets $ I_\alpha^1, \dots, I_\alpha^p $, where $ I_\alpha^i $ is a $ \le_{Q_i} $-interval, for $ i = 1, \dots, p $.
\end{corollary}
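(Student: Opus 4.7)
The plan is to derive Corollary \ref{cor:path-coherence} as an immediate specialization of Lemma \ref{lem:path coherence - chains}, in exactly the same way Corollary \ref{cor:1} was obtained from Lemma \ref{lem5}. First I would take the singleton $U := \{s\}$, which is trivially a $\le$-interval (any one-element subset is an interval since the interval condition $u < v < z$ with $u, z \in U$ is vacuous). I would then observe that the set $U'$ of states reachable from $\{s\}$ by following edges whose concatenated labels spell $\alpha$ is, by definition of $I_\alpha$, exactly $I_\alpha$ itself.

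Applying Lemma \ref{lem:path coherence - chains} to this $U$ and the given chain decomposition $\{Q_i\}_{i=1}^p$ then yields that $I_\alpha = U'$ decomposes as a disjoint union $I_\alpha^1 \sqcup \cdots \sqcup I_\alpha^p$ where each $I_\alpha^i$ is a $\le_{Q_i}$-interval. Setting $I_\alpha^i := I_\alpha \cap Q_i$ identifies the pieces concretely, and disjointness is inherited from the fact that $\{Q_i\}_{i=1}^p$ is a partition of $Q$.

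As an alternative route (avoiding direct appeal to Lemma \ref{lem:path coherence - chains}), I would first invoke Corollary \ref{cor:1} to get that $I_\alpha$ is a $\le$-interval, and then apply the preceding decomposition lemma (the one stating that every $\le$-interval splits into $p$ per-chain intervals) with $V = Q$, $U = I_\alpha$, and the decomposition $\{Q_i\}_{i=1}^p$. Both routes are essentially one-liners, so there is no significant obstacle: the corollary is a packaging result that combines the ``$I_\alpha$ is a $\le$-interval'' property with the ``$\le$-intervals split along chains'' property. The only thing to be careful about is noting that the hypothesis $\alpha \in \mathit{Pref}(\mathcal{L(A)})$ is used only to ensure $I_\alpha$ is nonempty in interesting cases; the statement remains true (with all pieces empty) if $\alpha \notin \mathit{Pref}(\mathcal{L(A)})$, so no extra case analysis is required.
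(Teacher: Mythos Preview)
Your proposal is correct and matches the paper's approach: the paper presents Corollary~\ref{cor:path-coherence} as an immediate restatement of Lemma~\ref{lem:path coherence - chains} (obtained exactly as Corollary~\ref{cor:1} was from Lemma~\ref{lem5}, by taking $U=\{s\}$), and your alternative route via Corollary~\ref{cor:1} plus the interval-splitting lemma is an equally valid one-liner.
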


Lemma \ref{lem:path coherence - chains} stands at the core of the indexing results of Section \ref{sec:idx}, 
%since all states reached by any query string $\alpha$ can be located by updating $p$ $ $-intervals in $|\alpha|$ step with a generalization of the well-known \emph{backward search} algorithm \cite{ferragina2000opportunistic} (which in our case should be called \emph{forward search} since we use the co-lexicographic ordering). 
where we will also show that the properties of co-lexicographic orders allow storing the automaton in $O(\log p)$ bits per transition on top of the labels. This motivates the problem of minimizing $p$, considered in Section \ref{sec:minimize p}.

\begin{definition}
Let $ \mathcal{A} = (Q, E, \Sigma, s, F) $ be an NFA.
\begin{enumerate}
\item We say that $ \mathcal{A} $ is \emph{$ p $-sortable} if there exists a co-lexicographic order $ \le $ of $ \mathcal{A} $ such that $ Q $ admits a $ \le $-chain decomposition $ \{Q_i\}_{i = 1}^p $.
\item The \emph{co-lexicographic width} $ \bar p $ of $\mathcal A$ is the smallest integer $ p $ for which $ \mathcal{A} $ is $ p $-sortable.
\end{enumerate}
\end{definition}

An NFA is Wheeler \cite{GAGIE201767} if and only if it is 1-sortable, i.e. if it has co-lexicographic width $\bar p = 1$.

\section{Compression and Indexing}\label{sec:idx}

The Burrows-Wheeler transform (BWT) \cite{burrows1994block} of a text is a permutation that re-arranges the text's characters according to the co-lexicographic order of the prefixes that precede them. 
The BWT boosts compression and enables efficient indexing in compressed space \cite{ferragina2000opportunistic}.
Previous works generalized this transform to trees \cite{Ferragina2005}, string sets \cite{mantaci2007extension}, de Bruijn graphs \cite{BOSS,GCSA} and Wheeler graphs \cite{GAGIE201767}.
In this section we finally generalize the BWT to arbitrary finite automata.

In order to introduce our transform, let us consider the example of Figure \ref{fig:2-sortable}. Consider the chain decomposition $Q_1 = \{0,1,3,6\}$, $Q_2 = \{4,2,5\}$ and the sequence of states obtained by concatenating the sorted elements of $Q_1$ and $Q_2$: $0,1,3,6, 4,2,5$. The left part of Figure \ref{tab:adj} visualizes the adjacency matrix of the graph using this state order. 
Chains $Q_1$ and $Q_2$ are highlighted in blue and yellow, respectively.
%Consider the block partition of this matrix (shown in Figure \ref{tab:adj} using different shades of gray) such that the cells of a block (i) are invariant by chain number on the vertical and horizontal axes and (ii) contain edges labeled with the same letter.
%each block contains cells whose rows belong to the same class $Q_i$ of the chain decomposition ($Q_1$ is highlighted in blue, while $Q_2$ in yellow) and whose columns belong to the same class $Q_j$ of the chain decomposition, in addition to corresponding to states with the same incoming label.
%\textcolor{orange}{Confonde un po', magari scriviamo una cosa così: "
Partition the adjacency matrix into blocks by drawing a horizontal line every time a new chain starts and a vertical line every time a new chain starts or the label associated with the considered state changes. In our example, we obtain 10 blocks, shown in the left part of Figure \ref{tab:adj} using different shades of gray. It is not hard to see that, by Definition \ref{def:co-lex order}, blocks are monotone, i.e. equally-labeled pairs of edges leaving a chain $Q_i$ and landing inside a chain $Q_j$ (possibly, $i=j$) preserve the co-lexicographic order of their endpoint states. As we observe below, this makes it possible to compress the matrix because for each edge we only need to specify its label and the two endpoint chains. The right part of Figure \ref{tab:adj} shows this construction.
%Each edge is associated with its label and two numbers between 1 and $p=2$, corresponding to its endpoint chains. 
This is a two-dimensional visualization of the Burrows-Wheeler transform of the NFA, which can be linearized in two sequences ($\tt OUT$ and $\tt IN$) as shown in the table. 

\definecolor{gr1}{rgb}{0.78, 0.78, 0.78}
\definecolor{gr2}{rgb}{0.9, 0.9, 0.9}

\definecolor{ce1}{rgb}{0.61, 0.77, 0.89}
\definecolor{pe1}{rgb}{1.0, 0.9, 0.71}
\definecolor{ao}{rgb}{0.0, 0.5, 0.0}

\newcolumntype{P}[1]{>{\centering\arraybackslash}p{#1}}

\begin{figure}
  \centering
  \begin{tabular}{|P{10pt}|P{10pt}|P{10pt}|P{10pt}|P{10pt}|P{10pt}|P{10pt}|P{10pt}|}
  \cline{2-8}
    \multicolumn{1}{c|}{} & $\#$ & \multicolumn{2}{c|}{a}&b&a&\multicolumn{2}{c|}{b}\\\cline{2-8}
    \multicolumn{1}{c|}{} & \cellcolor{ce1} 0 & \cellcolor{ce1} 1 & \cellcolor{ce1} 3 & \cellcolor{ce1} 6 & \cellcolor{pe1} 4 & \cellcolor{pe1} 2 & \cellcolor{pe1} 5 \\\hline
     \cellcolor{ce1} 0 & \cellcolor{gr1} & \cellcolor{gr2} a & \cellcolor{gr2} & \cellcolor{gr1} & \cellcolor{gr2} & \cellcolor{gr1} & \cellcolor{gr1} \\\hline
     \cellcolor{ce1} 1 & \cellcolor{gr1} & \cellcolor{gr2} & \cellcolor{gr2} & \cellcolor{gr1} & \cellcolor{gr2} & \cellcolor{gr1} b & \cellcolor{gr1} \\\hline
     \cellcolor{ce1} 3 & \cellcolor{gr1} & \cellcolor{gr2} & \cellcolor{gr2} & \cellcolor{gr1} & \cellcolor{gr2} a & \cellcolor{gr1} & \cellcolor{gr1} \\\hline
     \cellcolor{ce1} 6 & \cellcolor{gr1} & \cellcolor{gr2} & \cellcolor{gr2} & \cellcolor{gr1} & \cellcolor{gr2} & \cellcolor{gr1} & \cellcolor{gr1} b \\\hline
    \cellcolor{pe1} 4 & \cellcolor{gr2} & \cellcolor{gr1} & \cellcolor{gr1} a & \cellcolor{gr2} & \cellcolor{gr1} & \cellcolor{gr2} & \cellcolor{gr2} b\\\hline
     \cellcolor{pe1} 2 & \cellcolor{gr2} & \cellcolor{gr1} & \cellcolor{gr1} a & \cellcolor{gr2} & \cellcolor{gr1} & \cellcolor{gr2} & \cellcolor{gr2} b \\\hline
     \cellcolor{pe1} 5 & \cellcolor{gr2} & \cellcolor{gr1} & \cellcolor{gr1} & \cellcolor{gr2} b & \cellcolor{gr1} & \cellcolor{gr2} & \cellcolor{gr2} \\\hline
  \end{tabular}
  \hspace{20pt}
   \begin{tabular}{r|P{10pt}|P{30pt}|P{30pt}|P{30pt}|P{30pt}|P{30pt}|P{30pt}|P{30pt}|}
    \multicolumn{1}{c}{} & \multicolumn{1}{c}{\textcolor{ao}{$\tt{IN}$}} & \multicolumn{1}{c}{\textcolor{ao}{[]}} & \multicolumn{1}{c}{\textcolor{ao}{[1]}} & \multicolumn{1}{c}{\textcolor{ao}{[2,2]}} & \multicolumn{1}{c}{\textcolor{ao}{[2]}} & \multicolumn{1}{c}{\textcolor{ao}{[1]}} & \multicolumn{1}{c}{\textcolor{ao}{[1]}} & \multicolumn{1}{c}{\textcolor{ao}{[1,2,2]}}\\\cline{3-9}
    \multicolumn{1}{c}{\textcolor{red}{$\tt{OUT}$}} & \multicolumn{1}{c|}{} & \cellcolor{ce1} 0 & \cellcolor{ce1} 1 & \cellcolor{ce1} 3 & \cellcolor{ce1} 6 & \cellcolor{pe1} 4 & \cellcolor{pe1} 2 & \cellcolor{pe1} 5 \\\cline{2-9}
     \multicolumn{1}{c|}{\textcolor{red}{[(1,a)]}} & \cellcolor{ce1} 0 & \cellcolor{gr1} & \cellcolor{gr2} (\textcolor{ao}1,\textcolor{red}{1,a}) & \cellcolor{gr2} & \cellcolor{gr1} & \cellcolor{gr2} & \cellcolor{gr1} & \cellcolor{gr1} \\\cline{2-9}
     \multicolumn{1}{c|}{\textcolor{red}{[(2,b)]}} & \cellcolor{ce1} 1 & \cellcolor{gr1} & \cellcolor{gr2} & \cellcolor{gr2} & \cellcolor{gr1} & \cellcolor{gr2} & \cellcolor{gr1} (\textcolor{ao}1,\textcolor{red}{2,b}) & \cellcolor{gr1} \\\cline{2-9}
     \multicolumn{1}{c|}{\textcolor{red}{[(2,a)]}} & \cellcolor{ce1} 3 & \cellcolor{gr1} & \cellcolor{gr2} & \cellcolor{gr2} & \cellcolor{gr1} & \cellcolor{gr2} (\textcolor{ao}1,\textcolor{red}{2,a}) & \cellcolor{gr1} & \cellcolor{gr1} \\\cline{2-9}
     \multicolumn{1}{c|}{\textcolor{red}{[(2,b)]}} & \cellcolor{ce1} 6 & \cellcolor{gr1} & \cellcolor{gr2} & \cellcolor{gr2} & \cellcolor{gr1} & \cellcolor{gr2} & \cellcolor{gr1} & \cellcolor{gr1} (\textcolor{ao}1,\textcolor{red}{2,b}) \\\cline{2-9}
    \multicolumn{1}{c|}{\textcolor{red}{[(1,a),(2,b)]}} & \cellcolor{pe1} 4 & \cellcolor{gr2} & \cellcolor{gr1} & \cellcolor{gr1} (\textcolor{ao}2,\textcolor{red}{1,a}) & \cellcolor{gr2} & \cellcolor{gr1} & \cellcolor{gr2} & \cellcolor{gr2} (\textcolor{ao}2,\textcolor{red}{2,b})\\\cline{2-9}
     \multicolumn{1}{c|}{\textcolor{red}{[(1,a),(2,b)]}} & \cellcolor{pe1} 2 & \cellcolor{gr2} & \cellcolor{gr1} & \cellcolor{gr1} (\textcolor{ao}2,\textcolor{red}{1,a}) & \cellcolor{gr2} & \cellcolor{gr1} & \cellcolor{gr2} & \cellcolor{gr2} (\textcolor{ao}2,\textcolor{red}{2,b}) \\\cline{2-9}
     \multicolumn{1}{c|}{\textcolor{red}{[(1,b)]}} & \cellcolor{pe1} 5 & \cellcolor{gr2} & \cellcolor{gr1} & \cellcolor{gr1} & \cellcolor{gr2} (\textcolor{ao}2,\textcolor{red}{1,b}) & \cellcolor{gr1} & \cellcolor{gr2} & \cellcolor{gr2} \\\cline{2-9}
  \end{tabular}
  \vspace{5pt}
  \caption{\textbf{Left}. Adjacency matrix of the finite automaton of Figure \ref{fig:2-sortable}. The different shades of gray highlight the monotone blocks.
  %: equally-labeled edges leaving a chain $Q_i$ (blue/yellow) and landing inside a chain $Q_j$ preserve the co-lexicographic order of their endpoint states. 
  \textbf{Right}. Inside the gray cells: two-dimensional visualization of the Burrows-Wheeler transform of the automaton. For each edge we need to store only the two endpoint chains and the label. The BWT can be linearized in two sequences \texttt{OUT} and \texttt{IN}: collect \emph{vertically} the first component of every triple (in green) and \emph{horizontally} the other two components (in red). The transition function can be reconstructed from $\tt{OUT}$ and $\tt{IN}$.
  Square brackets indicate lists.
  %Lists are represented with square brackets.
  }
  
  %\textcolor{blue}{Metterei gli elementi di OUT e IN a sinistra e sopra la figura di destra dentro parentesi quadre, così è più chiaro che sono le liste introdotte nella defizione sotto, ed è coerente con l'uso di $ \emptyset $}\textcolor{red}{ok, fatto)}
  \label{tab:adj}
\end{figure}

\begin{definition}[BWT of an NFA]
Let $ \mathcal{A} = (Q, E, \Sigma, s, F) $ be an NFA. Let $ \le $ be a co-lexicographic order of $ \mathcal{A }$, and let $\mathcal Q := \{Q_i\}_{i=1}^p $ be a $ \le $-chain decomposition of $ Q $, with $s\in Q_1$ without loss of generality. Let $ \pi(v) $, with $v\in Q$, denote the unique integer such that  $ v \in Q_{\pi(v)} $.
%be the function such that, for every $ v \in Q $, it holds $ \pi (v) = k $, where $ k \in \{1, \dots, p\} $ is the unique integer such that $ v \in Q_k $. 
Consider the ordering $v_1, \dots, v_n $ of $ Q $ 
%obtained by concatenating $ Q_1, \dots, Q_p $, that is, the unique ordering 
such that for every $ 1 \le i < j \le n $ it holds  $ \pi(v_i) < \pi(v_j) \lor (\pi(v_i) = \pi(v_j) \land v_i < v_j) $. The BWT of $(\mathcal A, \leq, \mathcal Q)$ is the triple of sequences $\tt{BWT} = (\tt{OUT}, \tt{IN},  FINAL)$, each of length $ n $, such that, for every $ i = 1, \dots, n $:
\begin{itemize}
    \item $\mathtt{OUT}[i]$ is the list of all pairs $(\pi(u),c)$, for every edge $(v_i,u,c) \in E $ leaving $ v_i $.
    \item $\mathtt{IN}[i]$ is the list of all integers $\pi(w)$, for every edge $(w,v_i,c) \in E $ reaching $ v_i $.
    \item $\mathtt{FINAL}[i] = 1$ if $v_i\in F$, and $0$ otherwise. 
\end{itemize}
%\textcolor{blue}{
%Let $ \mathcal(A) = (Q, E, \Sigma, s, F) $. Let $ \le $ be a co-lexicographic order of $ \mathcal{A }$, and let $\mathcal Q := \{Q_i\}_{i=1}^p $ be a $ \le $-chain decomposition. Let $ \pi $ be the function such that, for every $ v \in Q $, it holds $ \pi (v) = k $, where $ k \in \{1, \dots, p\} $ is the unique integer such that $ v \in Q_k $. Consider the ordering $v_1, \dots, v_n $ of $ V $ obtained by concatenating $ Q_1, \dots, Q_p $, that is, the unique ordering such that for every $ 1 \le i < j \le n $ it holds  $ \pi(v_i) < \pi(v_j) \lor (\pi(v_i) = \pi(v_j) \land v_i < v_j) $. The BWT of $(\mathcal A, \leq, \mathcal Q)$ is the pair of sequences $\tt{BWT} = (\tt{OUT}, \tt{IN})$, each of length $ n $, such that, for every $ i = 1, \dots, n $:
%\begin{itemize}
%    \item $\mathtt{OUT}[i]$ is the list of all pairs $(\pi(u),c)$, for every edge $(v_i,u,c) \in E $ leaving $ v_i $.
%    \item $\mathtt{IN}[i]$ is the list of all integers $\pi(w)$, for every edge $(w,v_i,c) \in E $ reaching $ v_i $.
%\end{itemize}
%}
\end{definition}

%To simplify our exposition, we will simply say \emph{BWT of a $p$-sortable NFA $\mathcal A$} to indicate the BWT of $(\mathcal A, \leq, \mathcal Q)$ for some given order $\leq$ of width $p$ and some given partition $\mathcal Q$ of $\mathcal A$'s states into $p$ chains. 
%\textcolor{blue}{"
To simplify our exposition, we will simply say \emph{BWT of a $p$-sortable NFA $\mathcal A$} to indicate the BWT of $(\mathcal A, \leq, \mathcal Q)$, for some given co-lexicographic order $\leq$ of $ \mathcal{A} $ and some given $ \le $-chain decomposition $\mathcal Q = \{Q_i\}_{i=1}^p $ of $Q$.
%}
It is not hard to see that our BWT generalizes all existing approaches \cite{burrows1994block,Ferragina2005,mantaci2007extension,BOSS,GCSA,GAGIE201767}, for which $p=1$ always holds. For example, on (circular) strings the integers $\pi(u)$ and $\pi(w)$ are always equal to 1 and the lists $\mathtt{OUT}[i]$ and $\mathtt{IN}[i]$ have length 1. After removing the uninformative integers $\pi(u)$ and $\pi(w)$, only one label per state is left and sequence $\tt{OUT}$ coincides with the classic BWT. Similarly, also Wheeler automata satisfy $p=1$; 
%in this case however integers $\pi(u)$ and $\pi(w)$ must be kept (even if they are all equal to 1) in order to encode the in- and out- degrees of the states. \textcolor{blue}{Il senso è chiaro, però per esempio a voler essere pignoli $ \pi (u) $ può essere tolto tanto c'è $ c $... possiamo dire "
in this case, however, the information provided by $\pi(u)$ and $\pi(w)$ (in particular, the  lengths of the lists containing them) must be kept in order to encode the in- and out- degrees of the states.
%"}

\begin{theorem}\label{th:store NFA}
The BWT of a $p$-sortable NFA $ \mathcal{A} = (Q, E, \Sigma, s, F) $ can be stored as an invertible representation using $|E|(\lceil \log \sigma \rceil + 2\lceil\log p\rceil + 2) + |Q|$ bits.
\end{theorem}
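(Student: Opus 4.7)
The plan is to specify a concrete encoding of $\mathtt{OUT}$, $\mathtt{IN}$, $\mathtt{FINAL}$, verify the bit count, and then establish invertibility by describing how the entire NFA is rebuilt. Each of the $|E|$ entries of $\mathtt{OUT}$ is a pair $(\pi(u),c)$ requiring $\lceil\log p\rceil+\lceil\log\sigma\rceil$ bits; each of the $|E|$ entries of $\mathtt{IN}$ is an integer in $[1,p]$ requiring $\lceil\log p\rceil$ bits; two bit vectors of length $|E|$ delimit the lists of $\mathtt{OUT}$ and $\mathtt{IN}$ (one bit per edge marking the last entry of its list); and $\mathtt{FINAL}$ contributes $|Q|$ bits. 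Summing these gives the claimed $|E|(\lceil\log\sigma\rceil+2\lceil\log p\rceil+2)+|Q|$ bits. Empty lists require care but can be handled without extra bits by appealing to the structural assumptions: the unique empty $\mathtt{IN}$ list is at the initial state (assumption (iii)), while empty $\mathtt{OUT}$ lists occur only at final sinks identifiable via $\mathtt{FINAL}$ together with the chain decomposition.

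The substance of the argument is invertibility. Assuming the chain sizes $|Q_1|,\dots,|Q_p|$ are stored with lower-order bits, the ordering $v_1,\dots,v_n$ concatenates chains consecutively, so $\pi(v_i)$ is recovered from $i$ alone. The label $\lambda(v_i)$ is recovered as follows: locate any entry $(\pi(v_i),c)\in\mathtt{OUT}$ whose source chain (read from $\mathtt{IN}$) appears among the entries of $\mathtt{IN}[i]$; by input-consistency this $c$ is unique and equals $\lambda(v_i)$, while for the initial state we fix $\lambda(s)=\#$ by convention. Final states are read directly from $\mathtt{FINAL}$. The only remaining task is to pinpoint, for each pair $(j,c)\in\mathtt{OUT}[i]$, the specific destination $u\in Q_j$ with $(v_i,u,c)\in E$.

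I would pinpoint these destinations using the following block-monotonicity property derived from Axiom 2: for any chains $Q_{i'},Q_{j'}$, any label $c$, and any edges $(v,u),(v',u')\in E$ with $v,v'\in Q_{i'}$, $u,u'\in Q_{j'}$ and $\lambda(u)=\lambda(u')=c$, we have $v<v'\Rightarrow u\leq u'$. Indeed, if instead $u'<u$, then Axiom 2 would force $v'\leq v$, contradicting $v<v'$. Consequently, enumerating the edges from $Q_{i'}$ into $Q_{j'}$ labeled $c$ in source-first BWT order (ties broken by destination) produces a weakly increasing destination sequence, and this enumeration coincides position-by-position with the one obtained by scanning destinations $u\in Q_{j'}$ with $\lambda(u)=c$ in BWT order and reading the occurrences of $i'$ inside each $\mathtt{IN}[u]$.

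The main obstacle is turning this local monotonicity into a clean reconstruction procedure. Using rank/select on the two delimiter bit vectors, the $k$-th pair $(j,c)\in\mathtt{OUT}[v]$ sourced from $Q_{i'}$ is mapped to the $k$-th occurrence of $i'$ inside $\mathtt{IN}[u]$ over destinations $u\in Q_{j'}$ with $\lambda(u)=c$; block monotonicity guarantees that this matching is the unique one consistent with $E$, producing a deterministic inversion procedure that recovers every transition of $\mathcal A$ from the triple $(\mathtt{OUT},\mathtt{IN},\mathtt{FINAL})$ and completes the proof.
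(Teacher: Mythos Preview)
Your encoding and bit count are correct and identical to the paper's, and your block-monotonicity argument (via Axiom~2) for matching $\mathtt{OUT}$ pairs to $\mathtt{IN}$ entries is essentially the paper's inversion mechanism.

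There is, however, a genuine gap. You write ``Assuming the chain sizes $|Q_1|,\dots,|Q_p|$ are stored with lower-order bits.'' The theorem asserts an \emph{exact} bit count, not an asymptotic one, so you are not permitted to add $p\lceil\log n\rceil$ extra bits. The paper avoids this by reconstructing $\pi$ from the stored data alone: scan $\mathtt{OUT}$ and count, for each $j$, how many edges enter chain $j$ (the number of pairs with first component $j$); then walk the in-degree sequence encoded by the $\mathtt{IN}$ delimiter bitvector and find the prefix whose in-degrees sum to the number of edges entering chain~1---this determines $|Q_1|$, and iterating recovers all chain boundaries. Once $\pi$ is known, $\lambda$ is recovered the same way: for each chain $j$ and each label $c$, count in $\mathtt{OUT}$ the number of pairs $(j,c)$; since inside chain $j$ the states are sorted and incoming labels are weakly increasing (Axiom~1), matching these per-label counts against the in-degrees within chain $j$ yields $\lambda(v_i)$ for each $i$. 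Your own description of recovering $\lambda$ (``locate any entry $(\pi(v_i),c)\in\mathtt{OUT}$ whose source chain appears among the entries of $\mathtt{IN}[i]$'') is circular, because identifying \emph{which} $\mathtt{OUT}$ entries land in $v_i$ is precisely the inversion problem you are trying to solve and already presupposes knowing $\lambda(v_i)$.

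Once you replace the stored chain sizes with the counting argument above, the rest of your proof goes through and coincides with the paper's.
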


We now extend the FM-index \cite{ferragina2000opportunistic} to arbitrary NFAs. Note that the index is more space-efficient than the encoding of Theorem \ref{th:store NFA} by an additive term $\lceil \log p\rceil - 2$. There is a deep reason for this fact: as we show next, within this space we can compute any interval $I_\alpha$ (more in general, any $\leq$-interval of states reached by some string) by forward search. However, in general on NFAs $I_\alpha$ is not a singleton. As a result, states in $I_\alpha$ cannot be distinguished and forward search does not permit to invert the automaton. On the other hand, in Theorem \ref{th:BWT of DFA} we will observe that DFAs can be inverted within this space precisely because, in that case, the intervals $I_\alpha$ are singletons.

\begin{theorem}[FM-index of an NFA]\label{th:FM-index}
The BWT of a $p$-sortable NFA $ \mathcal{A} = (Q, E, \Sigma, s, F) $ can be encoded with a data structure of $|E|(\lceil \log\sigma\rceil + \lceil\log p\rceil + 2)\cdot (1+o(1)) + 2|Q|\cdot (1+o(1))$ bits that, given a query string $\alpha\in \Sigma^m$,  supports the following operations in $O(m\cdot p^2 \cdot \log(p\cdot \sigma))$ time:  
\begin{itemize}
    \item[(i)] Count the number of states reached by a path labeled $\alpha$.
    \item[(ii)] Return unique identifiers ($p$ ranges on the $p$ chains) for the states reached by a path labeled $\alpha$.
    \item[(iii)] Decide whether $\alpha \in \mathcal L(\mathcal A)$.
\end{itemize}
Given any $I_\alpha$ and $c\in \Sigma$, the structure can furthermore compute $I_{\alpha\cdot c}$ in $O(p^2 \cdot \log(p\cdot \sigma))$ time.
\end{theorem}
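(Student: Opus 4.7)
The plan is to adapt the backward-search machinery of the classical FM-index to a \emph{forward} search on the sequence $\mathtt{OUT}$, which plays the role of the ``L'' column. I will represent $I_\alpha$ as a $p$-tuple of chain-intervals $[a_i,b_i]_{Q_i}$, one per chain of the fixed decomposition (Corollary \ref{cor:path-coherence}), and extend by a character $c\in\Sigma$ using $O(p^2)$ rank/select queries on a wavelet tree built over $\mathtt{OUT}$.

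The stored structures are: (i) a wavelet tree over the concatenation $O$ of $\mathtt{OUT}[1],\dots,\mathtt{OUT}[n]$, viewed as a string of $|E|$ symbols over the composite alphabet $\{1,\dots,p\}\times\Sigma$, contributing $|E|\lceil\log(p\sigma)\rceil(1+o(1))$ bits and supporting rank/select for any fixed composite symbol in $O(\log(p\sigma))$ time; (ii) a bit vector $B_O$ of length $|E|+|Q|$ with $|Q|$ ones that marks the boundaries of the per-state out-lists; (iii) an indegree bit vector of length $|E|+|Q|$ used to convert the rank of an edge in $O$ into the position of its destination inside the target chain; (iv) the $\mathtt{FINAL}$ bit vector equipped with rank support; (v) a precomputed $C$-array $C[j,c]$ returning the left endpoint, inside $Q_j$, of the run $R_{j,c}$ of states of $Q_j$ with label $c$ (which is contiguous by Axiom 1). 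Adding $o(\cdot)$ overheads for rank/select support, the total space matches the claimed bound.

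For one forward step, given nonempty chain-intervals $[a_i,b_i]_{Q_i}$ and the character $c$, I iterate over all $p^2$ pairs $(i,j)$ and, for each, (a) use $B_O$ to map $[a_i,b_i]$ to the contiguous range $[l_i,r_i]$ of $O$ hosting the out-edges of those source states; (b) use one rank and one select on the wavelet tree restricted to the composite symbol $(j,c)$ to locate the first and last occurrences of $(j,c)$ inside $[l_i,r_i]$; (c) convert the resulting two edge ranks into chain-positions inside $Q_j$ using the indegree bit vector, offset by $C[j,c]$, yielding a pair $[m_{ij},M_{ij}]$ (possibly empty). By Lemma \ref{lem:path coherence - chains}, the target chain-interval is $[a'_j,b'_j]=[\min_i m_{ij},\max_i M_{ij}]$, so the merge is just a minimum and a maximum over $p$ candidates per chain. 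Each subroutine costs $O(\log(p\sigma))$, giving $O(p^2\log(p\sigma))$ per character and $O(m\cdot p^2\log(p\sigma))$ overall, as claimed.

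After the $m$ extensions have been performed, operation (ii) is the representation itself (the chain-intervals act as $p$ ranges of unique identifiers), (i) is the sum of the chain-interval lengths, and (iii) amounts to $p$ rank queries on $\mathtt{FINAL}$, one per chain-interval, to decide whether any reached state is final. The main technical obstacle I anticipate is step (c): translating an edge rank in $O$ into the position of the corresponding destination in $Q_j$ while staying within the tight $\lceil\log p\rceil$ bits-per-edge budget (as opposed to the $2\lceil\log p\rceil$ budget of Theorem \ref{th:store NFA}). The key insight is that, by Axiom 2, edges labeled $c$ landing inside $Q_j$ appear in $O$ in the same relative order as their destinations inside $Q_j$; therefore a single indegree bit vector, independent of the source chain, suffices to perform the conversion, even when distinct source chains contribute edges to the same destination state.
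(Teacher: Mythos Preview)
Your high-level plan coincides with the paper's: forward search on the concatenated $\mathtt{OUT}$ sequence, a wavelet tree over the composite alphabet $\{1,\dots,p\}\times\Sigma$, chain-interval representation of $I_\alpha$, and $O(p^2)$ rank-type queries per character. The data structures you list (wavelet tree, out-degree and in-degree delimiter bitvectors, $\mathtt{FINAL}$) are essentially those of the paper.

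There is, however, a genuine gap in step~(c) and in your ``key insight''. You claim that, by Axiom~2, the $(j,c)$-edges appear in $O$ in the same relative order as their destinations in $Q_j$. Axiom~2 does guarantee that if $v<v'$ in $Q_j$ and $(u,v,c),(u',v',c)\in E$ then $u\le u'$ in the \emph{partial} order; but the order of $O$ is by \emph{chain index} first, and chain indices need not respect $\le$ across chains. Concretely, one may have $u<u'$ with $\pi(u)>\pi(u')$, so $(u',v')$ precedes $(u,v)$ in $O$ while $v<v'$ in $Q_j$. Hence the rank of a $(j,c)$-occurrence inside $O$ is \emph{not} its rank in the destination order $\prec_j$, and the single in-degree bitvector cannot convert one into the other. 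Making your per-$(i,j)$ conversion correct would require the actual contents of $\mathtt{IN}$ (source-chain labels in $\prec_j$ order), i.e.\ the extra $\lceil\log p\rceil$ bits per edge that distinguish Theorem~\ref{th:store NFA} from Theorem~\ref{th:FM-index}.

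The paper sidesteps this: instead of computing per-source-chain destination intervals $[m_{ij},M_{ij}]$ and merging, it first \emph{sums} the prefix counts over all source chains, obtaining $\sum_i s_{i,j,c}$ and $\sum_i e_{i,j,c}$, and only then maps to $Q_j$ via $N_j(1+C_{j,c}+\sum_i s_{i,j,c})$ and $N_j(C_{j,c}+\sum_i e_{i,j,c})$. The sum is a global edge count and therefore does yield a valid $\prec_j$ position, without ever needing to know which source chain a particular incoming edge came from. This is precisely what lets the paper drop the $\mathtt{IN}$ values while retaining only the in-degree bitvector.

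A secondary point: your precomputed array $C[j,c]$ has $p\sigma$ entries and, if stored na\"ively, may exceed the stated space. The paper avoids storing it: $C_{j,c}$ (an \emph{edge} count, not a state position) is computed on the fly with $p$ two-dimensional range-counting queries on the same wavelet tree, which already fits in the $O(p^2\log(p\sigma))$ per-character budget.
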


In the worst case ($p=n$ and $|E| = \Omega(n^2)$) our index supports pattern matching on NFAs in time $O(m\cdot |E| \cdot \log n)$. This bound matches the lower bound of Equi et al. \cite{equi2020conditional,equi2020graphs} up to a logarithmic factor. To conclude, we observe that forward search enables a more efficient encoding for DFAs.

\begin{theorem}\label{th:BWT of DFA}
The BWT of a $p$-sortable DFA $ \mathcal{A} = (Q, E, \Sigma, s, F) $ can be stored as an invertible representation using $|E|(\lceil \log \sigma \rceil + \lceil\log p\rceil + 2) + |Q|$ bits.
\end{theorem}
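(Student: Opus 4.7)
The plan is to keep only the OUT sequence and the FINAL bitvector from the encoding of Theorem~\ref{th:store NFA}, dropping IN entirely. Since IN stores one source-chain index (i.e.\ $\lceil\log p\rceil$ bits) per edge, this saves exactly $|E|\cdot\lceil\log p\rceil$ bits with respect to the NFA encoding and yields the claimed total $|E|(\lceil\log\sigma\rceil+\lceil\log p\rceil+2)+|Q|$ bits: per edge we spend $\lceil\log\sigma\rceil$ for the label, $\lceil\log p\rceil$ for the destination chain, and $2$ bits for the OUT list delimiters, plus $|Q|$ bits for FINAL.

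The nontrivial content is invertibility. The remark immediately preceding the theorem already points to the right idea: the forward-search primitive of Theorem~\ref{th:FM-index}, which computes $I_{\alpha\cdot c}$ from $I_\alpha$, uses only the information carried by OUT (the out-neighborhood of each state, organized by destination chain and label, which is precisely the data needed by Lemma~\ref{lem:path coherence - chains}). In a DFA, determinism forces each $I_\alpha$ to be either empty or a singleton $\{u_\alpha\}$, so forward search started from $I_\epsilon=\{s\}$ (whose identity is known, since by the conventions on $s$ and by Axiom~1 it is the first element of $Q_1$) identifies every reachable state uniquely and one at a time.

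Concretely, I would reconstruct the transition function by a breadth-first traversal of $Pref(\mathcal L(\mathcal A))$, maintaining for each discovered $\alpha$ the singleton $I_\alpha=\{u_\alpha\}$ together with the identity of $u_\alpha$ (its chain index and its position inside the chain). For each such $\alpha$, I read OUT$[u_\alpha]$ and, for every outgoing pair $(Q_j,c)$, invoke the forward-search step to compute $I_{\alpha\cdot c}=\{u_{\alpha\cdot c}\}$; this simultaneously determines the destination of the corresponding edge and enqueues $\alpha c$ when $u_{\alpha\cdot c}$ is new. Since every state of $\mathcal A$ is reachable (Section~2, assumption (ii)), the procedure eventually visits every state and recovers every edge; finality of each discovered state is read off from FINAL.

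The main obstacle I anticipate is checking that forward search really works with OUT alone, without any recourse to IN. I expect this to follow directly from the proof of Theorem~\ref{th:FM-index}: its space bound matches the one claimed here up to the rank/select overhead and one extra $|Q|$ bits, which is strong evidence that the algorithmic content of the forward step does not touch IN. Once this is granted, the DFA singleton property upgrades the FM-index's ``interval update'' into a genuine state-by-state reconstruction, closing the invertibility argument.
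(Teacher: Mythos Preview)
Your high-level strategy---drop the $\lceil\log p\rceil$ bits per edge carried by the \texttt{IN} content, then use forward search plus the DFA singleton property to traverse the graph from $s$ and recover the transition function---is exactly the paper's approach. The gap is in what you drop and why forward search still works.

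You write ``dropping \texttt{IN} entirely'' and then assert that ``forward search \ldots\ uses only the information carried by \texttt{OUT}''. This is not right. Re-read the forward-search mechanism in the proof of Theorem~\ref{th:FM-index}: after counting the edges labeled $c$ that enter chain $j$, one obtains a rank \emph{among edges} and must convert it into a rank \emph{among states} via the function $N_j(k)$. That function is implemented using bitvector~(iv), i.e.\ the \texttt{IN} list delimiters, which encode the in-degrees. Since DFA states may have arbitrary in-degree, you cannot recover $N_j$ from \texttt{OUT} alone. What must be discarded is only the \texttt{IN} \emph{content} (the integers $\pi(w)$, $\lceil\log p\rceil$ bits per edge), not the \texttt{IN} delimiter bitvector. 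Your bit count happens to come out right because you allocate ``$2$ bits for the \texttt{OUT} list delimiters'', but \texttt{OUT} has only one delimiter bit per edge; the second bit is the in-degree bitvector, and you need it for correctness, not just accounting.

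The paper is explicit on this point: it keeps bitvectors~(ii) and~(iv) from Theorem~\ref{th:FM-index} (out-degrees and in-degrees), drops bitvectors~(i)/(iii) (chain lengths, which it reconstructs from \texttt{OUT} and the in-degrees), and drops only the \texttt{IN} content. With that correction your argument goes through verbatim.
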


\section{Relating the co-lexicographic width with NFA determinization}

In this section we prove another surprising result related to the co-lexicographic width $\bar p$: the classic powerset construction algorithm for NFA determinization generates an equivalent DFA of size exponential in $\bar p$, rather than in the  number of the NFA's states.

Let $ \mathcal{A} = (Q, E, \Sigma, s, F) $ be an NFA. The powerset construction algorithm builds a DFA $ \mathcal{A^*} = (Q^*, E^*, \Sigma, s^*, F^*) $ such that $ \mathcal{L(\mathcal{A})} = \mathcal{L(\mathcal{A^*})} $ defined as: 
%(i) $ Q^* = I_{Pref (\mathcal{L(\mathcal{A})})} $, 
(i) $ Q^* = \{ I_{\alpha}\ |\ \alpha\in Pref (\mathcal{L(\mathcal{A})})\}$, 
(ii) $ E^* = \{(I_\alpha, I_{\alpha e}, e)\ |\ \alpha \in \Sigma^*, e \in \Sigma, \alpha e \in Pref (\mathcal{L(\mathcal{A})}) \} $, (iii) $ s^* = \{s \} $, and (iv) $ F^* = \{I_\alpha\ |\ \alpha \in \mathcal{L(\mathcal{A})} \} $.

The number of states in $ Q^* $ can be exponential in the number of states in $ Q $, because in principle every nonempty subset of $ Q $ may be a state in $ Q^* $. On the other hand, Corollary \ref{cor:path-coherence} implies that, if $ \mathcal{A} $ is $ p $-sortable, then $ |Q^*| $ must be somehow bounded, because the elements of $ Q^* $ cannot be arbitrary as they must be the union of at most $ p $ intervals. Since every automaton admits a co-lexicographic order, we expect the co-lexicographic width to be a measure of the growth of $ |Q^*| $. For space reasons, our analysis has been moved to Appendix \ref{app:powerset}. We obtain the following result:

\begin{theorem}\label{th:powerset}
Let $ \mathcal{A} = (Q, E, \Sigma, s, F) $ be a $ p $-sortable NFA, and let $ \mathcal{A^*} = (Q^*, E^*, \Sigma, s^*, F^*) $ be the DFA obtained from $ \mathcal{A} $ by the powerset construction algorithm. Then, $ |Q^*| \le 2^p (|Q| - p + 1) - 1 $.
\end{theorem}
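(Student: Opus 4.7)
By Corollary \ref{cor:path-coherence}, every $X \in Q^*$ splits canonically as $X = X_1 \sqcup \cdots \sqcup X_p$ where $X_i := X \cap Q_i$ is a (possibly empty) $\leq_{Q_i}$-interval of the chain $Q_i$. The plan is to partition $Q^*$ according to the profile $B(X) := \{i \in [p] : X_i \neq \emptyset\} \subseteq [p]$, bound $|\{X \in Q^* : B(X) = B\}|$ separately for each non-empty $B$, and sum over the $2^p-1$ non-empty profiles.

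The combinatorial engine of the argument is the following monotonicity principle, which I would extract from Lemma \ref{lem:string-nodes}: for any distinct $I_\alpha \neq I_\beta$ in $Q^*$ with $\alpha < \beta$ and any chain $Q_i$, if both intersections $I_\alpha^i = [L_1, R_1]$ and $I_\beta^i = [L_2, R_2]$ are non-empty in $Q_i$, then $L_1 \leq L_2$ and $R_1 \leq R_2$. Each of the two inequalities comes from a separate application of Lemma \ref{lem:string-nodes} --- once to some $u \in I_\alpha^i \setminus I_\beta^i$ paired with an arbitrary $v \in I_\beta^i$ (forcing $R_1 \leq R_2$), and once to an arbitrary $u \in I_\alpha^i$ paired with some $v \in I_\beta^i \setminus I_\alpha^i$ (forcing $L_1 \leq L_2$) --- using that same-chain elements cannot be $\leq$-incomparable.

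Fixing a non-empty profile $B$, the monotonicity above implies that the tuples $(X_i)_{i \in B}$ arising from states $X \in Q^*$ with $B(X) = B$ form a chain in the product poset $\prod_{i \in B} \mathcal{I}(Q_i)$, where $\mathcal{I}(Q_i)$ is the set of non-empty intervals of $Q_i$ ordered componentwise by $(L_1,R_1) \preceq (L_2,R_2) \iff L_1 \leq L_2$ and $R_1 \leq R_2$. A standard lattice-path count shows that the longest chain in $\mathcal{I}(Q_i)$ has length $2n_i - 1$ (a monotone path from $(1,1)$ to $(n_i,n_i)$), so the longest chain in the $|B|$-fold product has length $\sum_{i \in B}(2n_i - 1) - (|B|-1) = 2\sum_{i \in B} n_i - 2|B| + 1$, which upper-bounds $|\{X \in Q^* : B(X) = B\}|$.

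Summing this per-profile bound over all non-empty $B \subseteq [p]$ and using the standard identities $\sum_{B \subseteq [p],\, B \ni i} 1 = 2^{p-1}$ and $\sum_{\emptyset \neq B \subseteq [p]} |B| = p \cdot 2^{p-1}$ collapses the total to $n \cdot 2^p - p \cdot 2^p + (2^p - 1) = 2^p(n - p + 1) - 1$, which matches the claimed bound. I expect the main technical step to be the first one --- turning Lemma \ref{lem:string-nodes} into the clean ``both endpoints weakly increase'' statement for the chain-intersections $I_\alpha^i$ --- since the per-profile reduction to a chain in a product of interval posets and the concluding summation are routine once the monotonicity is in hand. As consistency checks, the bound specialises to $2n - 1$ for $p=1$ (recovering the Wheeler case, where $[p]$ has only the singleton profile) and to $2^n - 1$ for $p = n$ (all chains singletons, reducing to counting non-empty subsets of an antichain).
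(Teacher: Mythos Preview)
Your proof is correct and follows the same global architecture as the paper: partition $Q^*$ by the support profile $B$, bound each class by a quantity of the form $2\sum_{i\in B}|Q_i|-2|B|+1$, and sum over the $2^p-1$ non-empty profiles using the standard identities.

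The difference lies in how the per-profile bound is obtained. The paper routes the argument through the abstract machinery of \emph{prefix/suffix families} (Lemma~\ref{lem:prefix-suffix family}) and the total order $\le^i$ on $I_{Pref(\mathcal L(\mathcal A))}^i$ imported from~\cite{alanko20regular} (Lemma~\ref{lem:4}); it then argues that the map $T_\alpha\mapsto\sum_{i\in K}\mathrm{rank}_{\le^i}(I_\alpha^i)$ is injective on $T_K$, using that Lemma~\ref{lem:1} forbids opposite inequalities in two coordinates. Your route is more elementary: you extract directly from Lemma~\ref{lem:string-nodes} that both endpoints of $I_\alpha^i$ weakly increase with $\alpha$, recast the profile class as a chain in the product poset $\prod_{i\in B}\mathcal I(Q_i)$, and read off its length from the standard height formula for products. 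This bypasses the prefix/suffix abstraction entirely and makes the $2|Q_i|-1$ factor transparent as the height of the interval lattice; the paper's approach, on the other hand, isolates a reusable structural statement (the $\le^i$ total order) that it also needs elsewhere. Do note that the endpoint-monotonicity step needs a short case analysis when one of $I_\alpha^i$, $I_\beta^i$ contains the other (your sketch assumes the relevant set differences are non-empty), but the containment cases are easily handled by the same appeal to Lemma~\ref{lem:string-nodes}.
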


Clearly, Theorem \ref{th:powerset} holds also for the co-lexicographic width $\bar p$ of $\mathcal{A}$. The theorem has an intriguing consequence: the PSPACE-complete NFA equivalence problem \cite{stockmeyer1973word} is fixed-parameter tractable with respect to $\bar p$. In order to prove this result, we first update the analysis of Hopcroft et al. \cite{hopcroft2006introduction} of the powerset construction algorithm.

\begin{lemma}[Adapted from \cite{hopcroft2006introduction}]\label{lem:powerset complexity}
Given a $p$-sortable NFA with $n$ states on alphabet of size $\sigma$, in $O(2^p(n-p+1)n^2\sigma)$ time the powerset construction algorithm generates an equivalent DFA with at most $2^p(n-p+1)$ states.
\end{lemma}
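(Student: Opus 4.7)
The plan is to combine Theorem \ref{th:powerset} with a standard bookkeeping analysis of the powerset construction. Theorem \ref{th:powerset} already supplies the bound $|Q^*| \le 2^p(n-p+1)-1 < 2^p(n-p+1)$ on the number of DFA states, so what remains is to certify the claimed running time. The idea is simply to charge the total work to the pairs $(I,c) \in Q^* \times \Sigma$ processed by the exploration, and show that each such pair costs $O(n^2)$.

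Concretely, I would describe the algorithm as a worklist exploration rooted at $s^* = \{s\}$: whenever a subset $I \in Q^*$ is popped from the queue, compute, for every $c \in \Sigma$, the successor $\delta^*(I,c) = \{q' \in Q \mid \exists q \in I : (q,q',c) \in E\}$, then test whether $\delta^*(I,c)$ has already been encountered and, if not, insert it in the table of known subsets and enqueue it. Each subset is represented as a bitmask of length $n$, so equality comparisons and hashing (or lookup in a height-$n$ trie over the bitmask) cost $O(n)$ per subset.

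Next, I would bound the per-step work. Computing $\delta^*(I,c)$ amounts to scanning, for each of the at most $n$ states $q \in I$, its $c$-labeled outgoing transitions (at most $n$ of them), while marking a boolean array of length $n$ to avoid duplicates; this costs $O(n^2)$. The subsequent membership test in the table of discovered subsets costs $O(n)$, which is absorbed. Multiplying the per-pair cost by the at most $2^p(n-p+1)$ distinct subsets and by the $\sigma$ alphabet characters gives the claimed $O(2^p(n-p+1)\,n^2\,\sigma)$ bound.

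The only delicate point, and the main thing to verify carefully, is the $O(n)$ charge for locating a newly computed subset among those already discovered; using a deterministic trie indexed bit by bit over the bitmask of $\delta^*(I,c)$ avoids any amortization or hashing assumption and keeps the lookup cost strictly below the $O(n^2)$ cost already spent to assemble the subset, so the total time is dominated by the subset-computation work.
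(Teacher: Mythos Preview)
Your argument is correct and matches the paper's own proof: both invoke Theorem~\ref{th:powerset} for the $2^p(n-p+1)$ state bound and then charge $O(n^2)$ work per (DFA state, character) pair by following at most $n$ outgoing $c$-edges from each of the at most $n$ NFA states in the subset. Your discussion of the $O(n)$ subset-lookup via a bitmask trie is additional implementation detail that the paper omits, but it does not change the approach.
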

\begin{proof}
From Theorem \ref{th:powerset}, let $N = 2^p(n-p+1)$ be an upper-bound to the number of states of the equivalent DFA. Each DFA state $x$ is a set $x=\{u_1, \dots, u_k\}$ formed by $k \leq n$ states $u_1, \dots, u_k$ of the original NFA. For each character $c\in \Sigma$ labeling an edge leaving $x$, we need to follow all edges labeled $c$ from $u_1, \dots, u_k$. In the worst case (a complete transition function), this leads to traversing $O(k\cdot n) \subseteq O(n^2)$ edges of the NFA. The final complexity is thus  $O(N\cdot n^2\cdot \sigma)$. \qed
\end{proof}

\begin{corollary}\label{cor:NFA equiv}
We can check the equivalence between two $p$-sortable NFAs with at most $n$ states each over alphabet of size $\sigma$ in $O(2^p(n-p+1)n^2\sigma)$ time. 
\end{corollary}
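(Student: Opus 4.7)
The plan is to reduce NFA equivalence to DFA equivalence via the powerset construction, invoking Lemma~\ref{lem:powerset complexity} to control the size of the resulting DFAs. Concretely, I first apply the powerset construction to each input NFA $\mathcal{A}_1, \mathcal{A}_2$: by Lemma~\ref{lem:powerset complexity}, each determinization runs in $O(2^p(n-p+1)n^2\sigma)$ time and yields an equivalent DFA with at most $N = 2^p(n-p+1)$ states. Since $\mathcal{L}(\mathcal{A}_1) = \mathcal{L}(\mathcal{A}_2)$ if and only if the two resulting DFAs accept the same language, the problem reduces to checking equivalence of two DFAs of size at most $N$.

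For the DFA equivalence test I would use a standard near-linear algorithm, e.g.\ the Hopcroft--Karp union-find procedure, which decides language equivalence of two DFAs in time $O(N \sigma \, \alpha(N))$ (alternatively one can minimize both DFAs in $O(N \sigma \log N)$ time via Hopcroft's algorithm and check isomorphism of the minima). This cost is negligible compared to the determinization cost, since $N = 2^p(n-p+1)$ and the determinization already spends an extra $n^2$ factor per state.

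Summing the two phases, the total running time is
\begin{equation*}
O\!\left( 2^p(n-p+1)\, n^2 \sigma \right) + O\!\left( 2^p(n-p+1)\, \sigma \, \alpha(N) \right) = O\!\left( 2^p(n-p+1)\, n^2 \sigma \right),
\end{equation*}
matching the claimed bound.

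The only subtle point is verifying that the equivalence step truly stays within the determinization bound; this is not completely automatic because an unoptimized product-construction approach would cost $O(N^2 \sigma)$ time, which can exceed $O(2^p(n-p+1)n^2 \sigma)$ when $p$ is close to $n$ (since then $N^2 = 2^{2p}$ while the target bound behaves like $2^p n^2$). Using Hopcroft--Karp (or DFA minimization) sidesteps this obstacle and keeps the equivalence phase strictly dominated by the determinization phase, completing the argument.
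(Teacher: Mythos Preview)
Your proof is correct and follows essentially the same approach as the paper: determinize both NFAs via the powerset construction using Lemma~\ref{lem:powerset complexity}, then test DFA equivalence in near-linear time (the paper specifically uses Hopcroft's $O(N\sigma\log N)$ minimization, which you also mention as an alternative), and observe that this second phase is dominated by the determinization cost. Your additional remark about why a naive $O(N^2\sigma)$ product construction would not suffice is a nice clarification that the paper omits.
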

\begin{proof}
Simply build the equivalent DFAs, of size at most $N\leq 2^p(n-p+1)$, by powerset construction. 
Finally, DFA equivalence can be tested in $O(N\sigma\log N)$ time by DFA minimization using Hopcroft's algorithm. The final running time is dominated by powerset construction, see Lemma  \ref{lem:powerset complexity}. \qed 
\end{proof}

Similarly, NFA determinization can be used to test membership of a word in a regular language expressed as an NFA. For sufficiently small $\bar p$ and $n$, this solution is faster than the classic one running in $O(mn)$ time based on dynamic programming \cite{thompson1968regular}:

\begin{corollary}
We can test membership of a word of length $m$ in the language recognized by a $p$-sortable NFA with $n$ states on alphabet of size $\sigma$ in $O(2^p(n-p+1)n^2\sigma + m)$ time. 
\end{corollary}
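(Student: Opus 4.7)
The plan is to reduce membership testing to running an equivalent DFA on the input word, leveraging the size bound on the determinized automaton given by Theorem \ref{th:powerset}. First, I would invoke Lemma \ref{lem:powerset complexity} to construct, from the given $p$-sortable NFA, an equivalent DFA $\mathcal{A}^*$ with at most $N = 2^p(n-p+1)$ states in time $O(2^p(n-p+1)\,n^2\,\sigma)$. This takes care of the first additive term in the claimed bound.

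Next, I would run $\mathcal{A}^*$ on the input word $w$ of length $m$ in the standard way: starting from the initial state $s^*$, read the characters of $w$ one by one, following at each step the unique outgoing transition labeled with the current character (if any). Accept iff after reading all of $w$ the current state is in $F^*$. To obtain $O(1)$ time per transition, I would store the transition function of $\mathcal{A}^*$ as a two-dimensional array indexed by (state, character), which is already implicit in the output of the powerset construction and fits within the $O(N \sigma)$ space already used. Hence this phase runs in $O(m)$ time, independent of $p$, $n$, and $\sigma$.

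Summing the two phases yields the total time $O(2^p(n-p+1)\,n^2\,\sigma + m)$, as claimed. There is no substantial obstacle: all heavy lifting is encapsulated in Theorem \ref{th:powerset} and Lemma \ref{lem:powerset complexity}, and the membership test on a DFA is a textbook linear scan. The mild subtlety worth mentioning is that the bound is interesting only when $2^p(n-p+1)\,n^2\sigma = o(mn)$, that is, when $p$ and $n$ are small enough that the determinization cost is dominated by the $\Theta(mn)$ cost of the classic dynamic-programming membership algorithm of Thompson \cite{thompson1968regular}; in that regime, the above two-phase procedure is strictly faster.
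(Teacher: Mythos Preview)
Your proposal is correct and matches the paper's intended argument: the paper states the corollary without an explicit proof, noting just beforehand that ``NFA determinization can be used to test membership,'' so the implicit proof is exactly your two-phase determinize-then-scan procedure via Lemma~\ref{lem:powerset complexity}. Your remark comparing the bound to Thompson's $O(mn)$ algorithm also mirrors the paper's own commentary preceding the corollary.
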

%\begin{proof}
%Simply build the equivalent DFA by powerset construction (Lemma \ref{lem:powerset complexity}) and then test membership in $O(m)$ time by navigating the DFA. \qed
%\end{proof}

\section{Complexity Results}\label{sec:minimize p}

In the previous sections we have seen that the co-lexicographic width $ \bar{p} $ is a relevant complexity measure for several problems on finite automata. In order to make our indexing and compression results of practical value, the next step is to study the problem of finding or bounding $ \bar{p} $ and determining a corresponding chain decomposition of the automaton's states.

We define the decision version of the \emph{sortability problem} as follows: given an NFA $ \mathcal{A} $ and an integer $ p $, determine whether $ \mathcal{A} $ is $ p $-sortable. 

\begin{theorem}\label{th:NP hard}
The sortability problem is NP-hard.
\end{theorem}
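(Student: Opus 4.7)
The plan is a direct reduction from the problem of deciding whether a given NFA admits a Wheeler order, which Gibney and Thankachan proved to be NP-complete. Since an NFA is Wheeler if and only if it is $1$-sortable (as noted immediately after the definition of $p$-sortability), the reduction is immediate: on input $\mathcal{A}$, output the sortability instance $(\mathcal{A}, 1)$. Any polynomial-time algorithm for the sortability problem would then decide Wheeler recognition in polynomial time, contradicting the Gibney--Thankachan lower bound. Thus, already the restriction of the sortability problem to $p=1$ is NP-hard, which suffices to establish Theorem \ref{th:NP hard}.

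The step that requires verification is that the hard instances produced by Gibney and Thankachan satisfy the five structural assumptions imposed in Section~2, namely input-consistency, reachability from a unique source, no incoming edges at the source, and co-accessibility of every state. I would argue this by either (i) observing that their construction already produces instances meeting these assumptions (or can be trivially modified to do so), or (ii) applying, as a polynomial-time normalization, the standard transformations described in Section~2: trimming unreachable or dead states, prepending a fresh source, and forcing input-consistency by replacing each state with $|\Sigma|$ copies. One must then check that each of these normalizations preserves both $1$-sortability and its negation, so that the yes/no answer to Wheeler recognition is preserved.

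The main obstacle I expect is the input-consistency enforcement: while Section~2 guarantees that the state-splitting transformation preserves the accepted language, one must additionally verify that it preserves the $1$-sortability status in \emph{both} directions. The forward direction, that a Wheeler order on $\mathcal{A}$ lifts to one on the copied NFA $\mathcal{A}'$, is straightforward, since copies of a single state $u$ can be ordered among themselves by the label of their distinguishing incoming edges and placed consecutively in the position originally occupied by $u$. The converse direction, that a Wheeler order on $\mathcal{A}'$ projects back to a Wheeler order on $\mathcal{A}$, is more delicate and will rely on the fact that Axiom~2 essentially forces the relative order of copies sharing a predecessor, so the induced quotient order on the original states is well-defined and satisfies both axioms. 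Once this equivalence of $1$-sortability across the transformation is established, the reduction is complete.
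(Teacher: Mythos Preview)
Your proposal is correct and follows exactly the paper's approach: the paper's proof is a two-sentence argument observing that an NFA is Wheeler if and only if it is $1$-sortable, and then invoking the Gibney--Thankachan NP-completeness result for Wheeler recognition. The additional care you take regarding the structural assumptions of Section~2 (input-consistency, trimming, fresh source) goes beyond what the paper actually does---the paper states the reduction without discussing whether the hard instances satisfy those hypotheses---so your discussion is more thorough than the original, not less.
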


\begin{proof}
A graph is Wheeler if and only it is $ 1 $-sortable. The conclusion follows by the NP-completeness of the problem of recognizing whether an NFA is Wheeler \cite{gibney19}. \qed
\end{proof}

In principle, we can obtain an upper bound to the co-lexicographic width by determining a co-lexicographic order $ \le $ and then building a $ \le $-chain decomposition. The hardness of the sortability problem lays in the hardness of finding the co-lexicographic order of smallest width, because, fixed a co-lexicographic order $ \le $, we can find its $ \le $-width (and an associated $ \le $-chain decomposition) in polynomial time:

\begin{lemma}\label{lem:complexity min chain partition}
Let $(V,\leq)$ be a partial order, with $|V|=n$. The smallest $ \le $-chain decomposition of $V$ can be found in $O(n^{5/2})$ time. 
\end{lemma}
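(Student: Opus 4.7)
The plan is to reduce the minimum $\leq$-chain decomposition problem to maximum bipartite matching, exploiting the classical equivalence (often attributed to Fulkerson) between Dilworth's theorem and König's theorem. Concretely, I would construct an auxiliary bipartite graph $G = (V_L \cup V_R, E')$, where $V_L$ and $V_R$ are two disjoint copies of $V$, and where $(u_L, v_R) \in E'$ whenever $u < v$ in the partial order $\leq$. Note that $|V_L| = |V_R| = n$ and $|E'| \leq \binom{n}{2} = O(n^2)$, and $G$ can be built in $O(n^2)$ time after (if necessary) computing the transitive closure of $\leq$ from its Hasse diagram.

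Next, I would compute a maximum matching $M$ of $G$ using the Hopcroft--Karp algorithm, which runs in time $O(|E'|\sqrt{|V_L|+|V_R|}) = O(n^2 \cdot \sqrt{n}) = O(n^{5/2})$. From $M$ I would recover a chain decomposition as follows: initialize $n$ singleton chains, one per element of $V$; then, for each matched edge $(u_L, v_R) \in M$, merge the chain currently ending at $u$ with the chain currently starting at $v$ by appending the latter to the former. Since each $u \in V$ is matched at most once on the left side and at most once on the right side, this concatenation is well-defined and produces a collection of $n - |M|$ disjoint chains whose union is $V$ (each chain being a sequence $u_1 < u_2 < \dots < u_k$ of $\leq$-comparable elements, by construction of $E'$).

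It remains to argue optimality: the number of chains produced is $n - |M|$, and by Dilworth's theorem (quoted in the excerpt) the minimum chain decomposition has size equal to the width $p$ of $(V,\leq)$. The equivalence $n - |M| = p$ follows from König's theorem applied to $G$: a minimum vertex cover of $G$ corresponds bijectively to a maximum antichain of $(V,\leq)$ (complementing the cover's structure), so $|M|$ equals the minimum number of ``merges'' needed, which in turn equals $n - p$. Hence the decomposition has size $p$, i.e.\ is minimum.

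The main obstacle, and the only place where a nontrivial running time is incurred, is the bipartite matching step: the Hopcroft--Karp bound of $O(n^{5/2})$ for dense bipartite graphs is precisely what gives the claimed complexity, so the argument is tight up to the choice of matching algorithm. The remaining steps (constructing $G$, reading off chains from $M$, verifying correctness) are routine and run in $O(n^2)$ time. \qed
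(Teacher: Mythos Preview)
Your proof is correct and follows essentially the same approach as the paper: both reduce minimum chain decomposition to maximum bipartite matching via the Ford--Fulkerson construction (two copies of $V$, edges for comparable pairs), solve the matching with Hopcroft--Karp in $O(n^{5/2})$, and read off the chains from the matching. Your exposition is in fact more detailed than the paper's, which simply cites Ford--Fulkerson for the reduction and recovers the chains as connected components after adding the diagonal pairs $(v,v)$.
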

\begin{proof}
Ford and Fulkerson \cite{ford1962flows} provided a reduction from the minimum chain decomposition problem to the maximum matching problem on the bipartite graph $(V',V'',E)$, where $V'=V''=V$ and $(v',v'')\in E \subseteq V'\times V''$ iff $v'\leq v''$. By completing the matching with pairs $(v,v) \in V'\times V''$ for every $v\in V$, the resulting connected components are the foresought chains. The complexity of the procedure thus reduces to that of finding a maximum matching, which can be solved by Hopcroft and Karp's algorithm \cite{HKmatching} in  $O(|E|\cdot \sqrt{n}) \in O(n^{5/2})$ time. \qed
\end{proof}

Let us study the family of  co-lexicographic orders that can be defined on the states of an NFA.

\begin{definition}
Let $ \mathcal{A} = (Q, E, \Sigma, s, F) $ be an NFA.
\begin{enumerate}
\item Let $ \le, \le^* $ be co-lexicographic orders on $ Q $. We say that $ \le^* $ is a \emph{refinement} of $ \le $ if:
\begin{equation*}
    u \le v \implies u \le^* v \quad \text{ $ \forall  u, v \in Q $}.
\end{equation*}
\item A co-lexicographic order $ \le $ on $ Q $ is \emph{maximal} if its unique refinement is $ \le $ itself.
\end{enumerate}
\end{definition}

\begin{remark}
Every co-lexicographic order $ \le $ is refined by a maximal co-lexicographic order. Indeed, either $ \le $ is maximal or $ \le $ is refined by some other co-lexicographic order, so we can build a non-extendable chain of pairwise distinct co-lexicographic orders such that every co-lexicographic order refines the previous one. Clearly, such a chain must be finite, so we obtain a maximal co-lexicographic order that refines $ \le $.
\end{remark}

\begin{remark}\label{rem:w*<w}
Assume that $ \le^* $ is a refinement of $ \le $, and let $ w^*$ and $ w $ be their corresponding widths. Then, it must be $ w^* \le w $ because every $ \le $-chain decomposition is also a $ \le^* $-chain decomposition. This implies that if $\bar p $ is the co-lexicographic width of $ \mathcal{A} = (Q, E, \Sigma, s, F) $, then there exists a \emph{maximal} co-lexicographic order on $ Q $ whose width is $\bar p $.
\end{remark}

In general, an NFA admits several maximal co-lexicographic orders (see also \cite{alanko20regular} for the case $\bar p=1$):
consider, as a simple example, a source $s$ connected by the same label to $n$ pairwise not-adjacent states.
Notably, in Appendix \ref{app:DFA unique} we prove  that DFAs admit a unique maximal co-lexicographic order. Moreover, we show that such an order can be found in polynomial time:

\begin{theorem}\label{th:complexity co-lex DFA}
Let $ \mathcal{A} = (Q, E, \Sigma, s, F) $ be a DFA. We can find the unique maximal co-lexicographic order of $ \mathcal{A} $ in $ O(|E|^2) $ time.
\end{theorem}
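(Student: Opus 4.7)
The plan is to exploit the uniqueness of the maximal co-lexicographic order $\leq^*$ of a DFA (Appendix \ref{app:DFA unique}), characterize $\leq^*$ by an explicit recursive rule, and then compute it with a Kahn-style propagation over predecessor pairs.

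\textbf{Characterization.} I would first prove that, for distinct states $u, v$,
\begin{equation*}
u <^* v\ \Longleftrightarrow\ \lambda(u) < \lambda(v)\ \text{ or }\ \bigl(\lambda(u) = \lambda(v)\ \text{ and }\ u' <^* v'\ \text{ for every }(u',u),(v',v) \in E\bigr).
\end{equation*}
The forward direction is Axiom 2 combined with the DFA fact that distinct same-label states have disjoint predecessor sets (otherwise a shared predecessor would have two outgoing edges with the same label), which strengthens $u' \leq^* v'$ into $u' <^* v'$. The backward direction is the main obstacle: assuming the RHS holds but $u <^* v$ fails, I would show that adding $(u, v)$ to $\leq^*$ and taking the transitive closure yields a strictly larger co-lexicographic order, contradicting maximality. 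The only potential antisymmetry violation, $v <^* u$, is excluded because its dual RHS would force $v' <^* u'$ on every predecessor pair, conflicting with our $u' <^* v'$.

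\textbf{Algorithm and complexity.} The RHS defines a monotone operator $F$, and by the characterization $\leq^*$ is its least fixed point containing the label-induced strict order. For each ordered pair $(u, v)$ of distinct same-label states, I maintain a counter $\mathrm{obs}(u, v)$ equal to the number of predecessor pairs $(u', v') \in N^-(u) \times N^-(v)$ not yet known to satisfy $u' < v'$. I initialize these counters using only label comparisons and enqueue every pair with $\mathrm{obs} = 0$. Then I repeatedly pop $(u, v)$, record $u <^* v$, and for every label $b$ such that both $\delta(u, b)$ and $\delta(v, b)$ are defined, decrement $\mathrm{obs}(\delta(u, b), \delta(v, b))$ and enqueue it whenever it reaches zero. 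Initialization costs $\sum_{a \in \Sigma} |E_a|^2 \leq |E|^2$ time by iterating over pairs of same-label edges. Each declared pair $(u, v)$ triggers at most $\min(|N^+(u)|, |N^+(v)|)$ decrements (one per common out-label, and DFA states have one out-edge per label), so the total propagation cost is bounded by $\sum_{(u,v)} |N^+(u)| = |Q| \cdot |E| = O(|E|^2)$, using $|Q| \leq |E|+1$ (every non-source state has an incoming edge).

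\textbf{Correctness.} The computed relation coincides with the least fixed point of $F$ by construction: a pair is declared exactly when all its predecessor pairs have been. Termination is immediate since each pair is declared at most once. Antisymmetry cannot be violated because $\mathrm{obs}(u, v) = \mathrm{obs}(v, u) = 0$ could hold simultaneously only if $u' <^* v'$ and $v' <^* u'$ held on every (nonempty) predecessor pair, which is impossible.
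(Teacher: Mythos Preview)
Your characterization of $<^*$ as a fixed point of the operator $F$ is correct: both directions of the bi-implication hold, and your argument for the backward direction (add the pair, close transitively, invoke maximality) can be made rigorous via the paper's Lemma~\ref{lem:2}. The fatal error is the unjustified leap to ``$\le^*$ is its \emph{least} fixed point''. In general it is not. Consider the DFA with states $s,q,u_1,u_2,v_1,v_2$ and edges $s\to_a u_1$, $s\to_c q$, $q\to_a v_1$, $u_1\to_b u_2$, $u_2\to_a u_1$, $v_1\to_b v_2$, $v_2\to_a v_1$. The total order $s<u_1<v_1<u_2<v_2<q$ satisfies both axioms, so this DFA is Wheeler and in particular $u_1<^* v_1$ and $u_2<^* v_2$. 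But in your Kahn-style propagation, after the label-based initialization we have $\mathrm{obs}(u_1,v_1)=1$ (the predecessor pair $(u_2,v_2)$ is still undeclared) and $\mathrm{obs}(u_2,v_2)=1$ (the predecessor pair $(u_1,v_1)$ is still undeclared); neither counter ever reaches zero, and your algorithm terminates declaring $u_1\,\|\,v_1$ and $u_2\,\|\,v_2$. The phenomenon is general: whenever two same-label cycles run ``in parallel'', comparability of the corresponding states is perfectly consistent with the axioms but has no well-founded derivation from the base case, and a least-fixed-point computation cannot discover it.

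The paper avoids this trap by working in the opposite direction. It first fixes an underlying \emph{total} order $\le_\#\supseteq\le^*$ (obtained by prefix-sorting any spanning tree rooted at $s$), then characterizes the $\le^*$-incomparable pairs as exactly those reachable in the pair graph from ``seed'' pairs whose predecessors are inverted with respect to $\le_\#$ (Theorem~\ref{theor3}), and propagates \emph{incomparability} forward by a linear-time visit. In fixed-point terms, the paper computes the least fixed point of the incomparability operator, equivalently the greatest consistent comparability relation, whereas you compute the least fixed point of the comparability operator; these coincide only in the acyclic case. Your algorithm can be salvaged by inverting it: start from $\le_\#$ and remove pairs rather than starting from the label order and adding them, which is precisely the paper's method.
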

\begin{proof}[Sketch]
We show that, by sorting any spanning tree of $\mathcal A$ rooted in $s$, we obtain a total order $\leq_{\#}$ which is a superset of the unique maximal order $\leq^*$ of $\mathcal A$. The order $\leq^*$ is obtained by (i) finding "base-case" incomparable state pairs having inconsistent predecessors by $\leq_{\#}$, and (ii) propagating the incomparability relation by following pairs of equally-labeled edges. \qed
\end{proof}

\begin{remark}\label{rem3}
If there exists only one maximal co-lexicographic order $ \le ^* $, then $ \le^* $ refines every co-lexicographic order, and by Remark \ref{rem:w*<w} the $ \le^* $-width of $ Q $ is the co-lexicographic width $\bar p$ of $ \mathcal{A} $.
\end{remark}

By Theorem \ref{th:complexity co-lex DFA}, Remark \ref{rem3}, and Lemma \ref{lem:complexity min chain partition} we obtain:

\begin{corollary}\label{cor:poly-DFA}
Let $ \mathcal{A} = (Q, E, \Sigma, s, F) $ be a DFA. We can find the unique maximal co-lexicographic order $\leq$ of $\mathcal A$ and the corresponding smallest chain decomposition $ \{Q_i \}_{i = 1}^{\bar{p}} $, where  $\bar p$ is the co-lexicographic width of $ \mathcal{A}$, in $O(|E|^2+|Q|^{5/2}) $ time.
\end{corollary}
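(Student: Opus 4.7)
The plan is to obtain the corollary as a direct composition of three already-established results: Theorem~\ref{th:complexity co-lex DFA}, Remark~\ref{rem3}, and Lemma~\ref{lem:complexity min chain partition}. There is essentially no new content to prove; the work consists in checking that the pieces snap together and that the running times add as claimed.

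First, I would invoke Theorem~\ref{th:complexity co-lex DFA} to compute, in $O(|E|^2)$ time, the unique maximal co-lexicographic order $\leq^{*}$ of $\mathcal A$. Uniqueness is the key feature exploited by the rest of the argument: since every co-lexicographic order of $\mathcal A$ is refined by some maximal one (by the remark following the definition of maximal orders) and $\leq^{*}$ is the only maximal order available, $\leq^{*}$ must refine every co-lexicographic order of $\mathcal A$. In particular, by Remark~\ref{rem:w*<w}, its width $w^{*}$ satisfies $w^{*}\leq w$ for the width $w$ of any other co-lexicographic order of $\mathcal A$. Taking $w$ to be realized by a co-lexicographic order witnessing the co-lexicographic width, we get $w^{*}\leq \bar p$; the reverse inequality $\bar p\leq w^{*}$ holds by definition of $\bar p$. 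Hence the $\leq^{*}$-width equals $\bar p$, which is precisely the content of Remark~\ref{rem3}.

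Second, having the partial order $(Q,\leq^{*})$ in hand, I would apply Lemma~\ref{lem:complexity min chain partition} with $n=|Q|$: this returns a smallest $\leq^{*}$-chain decomposition $\{Q_i\}_{i=1}^{\bar p}$ in $O(|Q|^{5/2})$ time. By Dilworth's theorem the number of chains produced equals the $\leq^{*}$-width, which by the previous paragraph equals $\bar p$, so the decomposition has the claimed size. Summing the two costs gives the total $O(|E|^2+|Q|^{5/2})$ bound in the statement.

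The only mild subtlety, which I would flag explicitly, is making sure the output of the first procedure is in a form that the second procedure can ingest: Lemma~\ref{lem:complexity min chain partition} needs the bipartite graph on $V'\cup V''$ with an edge $(v',v'')$ for every $v'\leq^{*} v''$. This incurs at most $O(|Q|^2)$ preparatory work, which is absorbed by the $O(|Q|^{5/2})$ term and therefore does not affect the total. Beyond this bookkeeping, no genuine obstacle arises, since every analytic ingredient (existence and uniqueness of $\leq^{*}$, equality of its width with $\bar p$, and the matching-based minimum chain decomposition) has already been proved earlier in the paper. \qed
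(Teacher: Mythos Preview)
Your proposal is correct and matches the paper's own argument exactly: the paper derives the corollary directly from Theorem~\ref{th:complexity co-lex DFA}, Remark~\ref{rem3}, and Lemma~\ref{lem:complexity min chain partition}, with no additional content. Your explicit check that the $O(|Q|^2)$ cost of building the bipartite graph is absorbed by the $O(|Q|^{5/2})$ term is a harmless (and welcome) bit of extra care.
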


Corollary \ref{cor:poly-DFA} implies that the FM-index of a DFA (Theorem \ref{th:FM-index}) can be built in polynomial time while matching the co-lexicographic width $\bar p$ of the automaton.

\newpage

\appendix

\section{Proof of Lemma \ref{lem:string-nodes}}\label{app:string-nodes}

\paragraph{\textbf{Statement}}
Let $ \mathcal{A} = (Q, E, \Sigma, s, F, \le) $ be a CNFA. Let $ u, v \in Q $ and $ \alpha, \beta, \in Pref (\mathcal{L(A)}) $ such that $ u \in I_\alpha $, $ v \in I_\beta $ and $ \{u, v \} \not \subseteq I_\alpha \cap I_\beta $.
\begin{enumerate}
\item If $ \alpha < \beta $, then $ u ~\|~ v $ or $ u < v $.
\item If $ u < v $, then $ \alpha \prec \beta $.
\end{enumerate}

\paragraph{\textbf{Proof}}
Since $ \{u, v\} \subseteq I_\alpha \cap I_\beta $, then either $ u \in I_\alpha \setminus I_\beta $ or $ v \in I_\beta \setminus I_\alpha $. Hence $ \alpha \not = \beta $ and $ u \not = v $.
\begin{enumerate}
\item We proceed by induction on $ \min( |\alpha|, | \beta|) $. If $ \min( |\alpha|, | \beta|) = 0 $, then $ \alpha = \epsilon $, so $ u = s $. We conclude $ u = s < v $ by Axiom 1.

Now assume $ \min( |\alpha|, | \beta|) \ge 1 $. This implies $ \alpha \not = \epsilon \not = \beta $ and $ u \not = s \not = v $. Let $ a $ be the last letter of $ \alpha $ and let $ b $ the last letter of $ \beta $; it must be $ a \le b $. If $ a < b $, then $ \lambda(u) < \lambda(v) $, which implies $ u < v $ by Axiom 1. Otherwise, we can write $ \alpha = \alpha' e $ and $ \beta = \beta' e $, with $ e \in \Sigma $, $ \alpha', \beta' \in \Sigma^* $ and $ \alpha' < \beta' $. Let $ u', v' \in  Q $ such that $ u' \in I_{\alpha'} $, $ v' \in I_{\beta'} $, $ u \in \delta (u', e) $, $ v \in \delta(v', e) $. Then $ \{u', v' \} \not \subseteq I_{\alpha'} \cap I_{\beta'} $, otherwise $ \{u, v \} \subseteq I_\alpha \cap I_\beta $. By the inductive hypothesis, we have $ u' ~\|~ v' $ or $ u' < v' $. Hence it must be $ u ~\|~ v $ or $ u < v $, otherwise it would be $ v < u $, which implies $ v' \le u' $ by Axiom 2.

\item We know that $ \alpha \not = \beta $. If it were $ \beta < \alpha $, then by the previous part it would be $ v ~\|~ u $ or $ v < u $, leading to a contradiction. \qed
\end{enumerate}

\section{Proof of Corollary \ref{cor:1}}\label{app:cor1}

\paragraph{\textbf{Statement}}
Let $ \mathcal{A} = (Q, E, \Sigma, s, F, \le) $ be a CNFA. Let $ \alpha \in \Sigma^*$. Then, $ I_\alpha $ is a $ \le $-interval.

\paragraph{\textbf{Proof}}
We can prove this corollary also using Lemma \ref{lem:string-nodes}. Assume that $ u, v, z \in Q $ are such that $ u < v < z $ and $ u, z \in I_\alpha $. Suppose by contradiction that $ v \not \in I_\alpha $. Let $ \beta \in \Sigma^* $ be such that $ v \in I_\beta $. Since $ v \in I_\beta \setminus I_\alpha $, then $ \{u, v \} \not \subseteq I_\alpha \cap I_\beta $. By Lemma \ref{lem:string-nodes}, we have $ \alpha < \beta $. Similarly one obtains $ \beta < \alpha $. Hence $ \alpha < \alpha $, a contradiction. \qed

\section{Proof of Lemma \ref{lem5}}

\paragraph{\textbf{Statement}}
Let $ \mathcal{A} = (Q, E, \Sigma, s, F, \le) $ be a CNFA. Let $ \alpha \in \Sigma^*$, and let $ U $ be a $ \le $-interval of states. Then, the set $ U' $ of all states in $ Q $ that can be reached from $ U $ by following edges whose labels, when concatenated, yield $ \alpha $, is still a $ \le $-interval.

\paragraph{\textbf{Proof}}
We proceed by induction on $ | \alpha | $. If $ |\alpha| = 0 $, then $ \alpha = \epsilon $ and we are done. Now assume $ |\alpha| \ge 1 $. We can write $ \alpha = \alpha' a $, with $ \alpha' \in \Sigma^* $, $ a \in \Sigma $. Let $ u, v, z \in Q $ such that $ u < v < z $ and $ u, z \in U' $. We must prove that $ v \in U' $. By the inductive hypothesis, the set $ U'' $ of all states in $ Q $ that can be reached from some state in $ U $ by following edges whose labels, when concatenated, yield $ \alpha' $, is a $ \le $-interval. In particular, there exist $ u', z' \in U'' $ such that $ (u', u), (z', z) \in E $. Since $ u, z \in U' $, then $ \lambda (u) = a = \lambda (z) $. This implies that $ \lambda (v) = a $, because if for example it were $ \lambda (v) < a $, then we would have $ \lambda (v) < \lambda (u) $, which by Axiom 1 would imply $ v < u $, a contradiction. Since $ \lambda (v) = a $, then $ v $ must have at least one incoming edge $ (v', v) \in E $. By Axiom 2, we have $ u' \le v' \le z' $. Since $ u', z' \in U'' $ and $ U'' $ is a $ \le $-interval, then $ v' \in U''$, and so $ v \in U' $. \qed

\section{Proof of Lemma \ref{lem:lower bound}}\label{app:lower bound}

\paragraph{\textbf{Statement}}
The following hold: (1) Any partial order $(V,\leq)$ of width $ p $  has at least $2^p$ distinct $\leq$-intervals.
(2) For any size $n=|V|$ and any $1\leq p \leq n$, there exists a partial order $(V,\leq)$ of width $p$ with at least $(n/p)^p$ distinct $\leq$-intervals.
\paragraph{\textbf{Proof}}
Recall that the width $p$ of $(V,\leq)$ is defined as the size of its largest antichain $A$. (1) It is easy to see that any subset $I \subseteq A$ is a distinct $\leq$-interval. The bound $2^p$ follows. 
(2) Consider a partial order formed by $p$ chains $V_i = \{u_1^i < \dots < u_{n_i}^i\}$, for $i=1,\dots, p$ such that $u_a^i\ \|\ u_b^j$ for any $i\neq j$, $1\leq a \leq n_i$, and $1\leq b \leq n_j$. Then, any combination of $\leq_{V_i}$-intervals forms a distinct $\leq$-interval. On the $i$-th chain, there are $(n_i+1)n_i/2 + 1$ distinct intervals.
It follows that the number of distinct $\leq$-intervals is $\prod_{i=1}^p ((n_i+1)n_i/2+1) \geq \prod_{i=1}^p n_i $ (this simplification is motivated by Remark \ref{rem:lower bounds bit}, where we will take the logarithm of this quantity). By AM-GM inequality this quantity is maximized when all $n_i$ are equal to $n/p$, yielding at least $(n/p)^p$ distinct intervals. It is immediate to see that on such an order the size of the largest antichain is $p$ (simply take one element per chain). \qed

\section{Proof of Theorem \ref{th:store NFA}}\label{app:store NFA}

\paragraph{\textbf{Statement}}
The BWT of a $p$-sortable NFA $ \mathcal{A} = (Q, E, \Sigma, s, F) $ can be stored as an invertible representation using $|E|(\lceil \log \sigma \rceil + 2\lceil\log p\rceil + 2) + |Q|$ bits.

\paragraph{\textbf{Proof}} Using two bitvectors of length $|E|$, we mark with a bit set the last element in every list $\mathtt{OUT}[i]$ and $\mathtt{IN}[i]$ in order to encode their lengths. The remaining components take trivially $\lceil \log \sigma \rceil + 2\lceil\log p\rceil$ bits per transition. Bitvector $\tt FINAL$ takes $|Q|$ bits. 

We now show how to invert the representation.
First, assign the numbering $v_i = i$ to the states.
Using $\tt OUT$ and $\tt IN$, we can reconstruct functions $\pi$ (chain number of each state) and $\lambda$ (incoming labels of each state): scan $\tt OUT$ and count how many edges enter each chain; then, combine this information with the in-degrees of the nodes (sequence $\tt IN$) to reconstruct function $\pi$. This is possible since incoming edges in $\tt IN$ are sorted by increasing chain. Similarly, once reconstructed $\pi$ scan $\tt OUT$ and collect the number of edges labeled $c$, for every $c \in \Sigma$, that enter each chain, and combine this information with the in-degrees of the nodes to reconstruct function $\lambda$. This is possible since incoming edges in $\tt IN$ that enter the same chain are sorted by increasing letter.
This yields the block partition of the adjacency matrix shown in Figure \ref{tab:adj} using shades of gray. At this point, we use the monotonicity property of each block. 
Let $\mathtt{IN_{k,c}}$ denote the sub-sequence of $\tt IN$ corresponding to states $u$ with $\pi(u)=k$ and $\lambda(u)=c$.
For $i=1, \dots, n$, extract the pairs from $\mathtt{OUT}[i]$. For each such pair $(k,c)$, extract the leftmost element equal to $\pi(v_i)$ from $\mathtt{IN_{k,c}}$, and let $j$ be its column number. Finally, insert an edge labeled $c$ at coordinate $(i,j)$.\qed

\section{Proof of Theorem \ref{th:FM-index}}\label{app:FM-index}

\paragraph{\textbf{Statement}}
The BWT of a $p$-sortable NFA $ \mathcal{A} = (Q, E, \Sigma, s, F) $ can be encoded with a data structure of $|E|(\lceil \log\sigma\rceil + \lceil\log p\rceil + 2)\cdot (1+o(1)) + 2|Q|\cdot (1+o(1))$ bits that, given a query string $\alpha\in \Sigma^m$,  supports the following operations in $O(m\cdot p^2 \cdot \log(p\cdot \sigma))$ time:  
\begin{itemize}
    \item[(i)] Count the number of states reached by a path labeled $\alpha$.
    \item[(ii)] Return the states reached by a path labeled $\alpha$ as $p$ ranges on the $p$ chains.
    \item[(iii)] Decide whether $\alpha \in \mathcal L(\mathcal A)$.
\end{itemize}
Given any $I_\alpha$ and $c\in \Sigma$, the structure can furthermore compute $I_{\alpha\cdot c}$ in $O(p^2 \cdot \log(p\cdot \sigma)$ time.

\paragraph{\textbf{Proof}}
The index relies on the path coherency property of Lemma \ref{lem:path coherence - chains}. Intuitively, consider the ranges $U_1, \dots, U_p$ of states reached by string $\alpha$ on the $p$ chains $ Q_1, \dots, Q_p$. Given a character $c$, the goal is to update those ranges to the sets  $U'_1, \dots, U'_p$ of states reached by string $\alpha\cdot c$. By Lemma \ref{lem:path coherence - chains}, $U'_1, \dots, U'_p$ are indeed ranges on the $p$ chains. This extension step is a straightforward generalization of the search mechanism used on classic FM-indexes \cite{ferragina2000opportunistic} and, more in  general, on Wheeler graphs \cite{GAGIE201767}: forward search. 
Let $\mathtt{OUT}_i$ denote the subsequence of $\tt OUT$ corresponding to the $i$-th chain.
The forward search algorithm starts with $\alpha = \epsilon$ (the empty string), and updates the $\leq$-interval of the current string character by character by right-extensions.
We first describe the extension algorithm, and then discuss the data-structure details.
Let $\mathtt{OUT}_i[l_i,r_i]$ denote the range of states on the $i$-th chain reached by string $\alpha$ (that is, states in $U_i$). At the beginning ($\alpha = \epsilon$), each $\mathtt{OUT}_i[l_i,r_i] = \mathtt{OUT}_i[1,|\mathtt{OUT}_i|]$ is the full range. 
Assume we want to update those ranges by appending character $c$ to the current string $\alpha$.
For each $1\leq i,j \leq p$ we count the number of pairs of the form $(j,c)$ contained in subsequences $\mathtt{OUT}_i[1,l_i-1]$ and  $\mathtt{OUT}_i[1,r_i]$. Let $s_{i,j,c}$ and $e_{i,j,c}$ ("start" and "end", respectively) be such counters. For each chain $i$, we moreover count the number $C_{i,c}$ of edges $(u,v,a)$ entering the $i$-th chain (that is, $\pi(v)=i$) such that $a<c$ (this counter is known with the name "C array" in classic FM-indexes). Finally, consider the ordering $\prec_i$ of the edges entering the $i$-th chain
 defined as $(u,v) \prec_i (u',v')$ iff $v \leq v'$, with $\pi(v)=\pi(v')=i$ (edges reaching the same node can be ordered arbitrarily). Let $N_i(k)$ denote the integer such that the $k$-th edge of this ordering enters the $N_i(k)$-th state in the $i$-th chain. At this point, the forward search algorithm of \cite{ferragina2000opportunistic,GAGIE201767} generalizes as follows: for all $j=1, \dots, p$ the new interval $\mathtt{OUT}_j[l'_j,r'_j]$ of $\alpha\cdot c$ in the $j$-th chain is given by $l'_j = N_j( 1 + C_{j,c} + \sum_{i=1}^p s_{i,j,c})$ and $r'_j = N_j( C_{j,c} + \sum_{i=1}^p e_{i,j,c})$. Note that such an interval $\mathtt{OUT}_j[l'_j,r'_j]$ could be empty (more precisely, $r'_j = l'_j-1$) if there are no states in the $j$-th chain that are reached by $\alpha\cdot c$. If all intervals $\mathtt{OUT}_j[l'_j,r'_j]$, for $j=1\dots, p$ are empty, then no state of $\mathcal A$ is reached by $\alpha\cdot c$ and the search can stop. Otherwise, note that empty intervals have an important role: they indicate the position in the $j$-th chain where a state reached by $\alpha\cdot c$ would be placed, if it existed. This position is important as it is required to update the ranges on the other chains after a character extension. Crucially, note that we do not use the integers contained in $\tt IN$ (which are, in fact, not stored): we only use information regarding the length of each list $\mathtt{IN}[i]$ (required to compute function $N_i(k)$).

We now discuss the data-structure details for implementing efficiently the above procedure. 
To answer query (iii), we encode the bitvector $\tt FINAL$ (marking final states) with a representation supporting constant-time rank operations \cite{raman2007succinct}. 
Interval $I_\alpha$ can be obtained by running the forward search algorithm starting from the interval $[1,1]$ containing only the start state in the first chain, and the empty interval $[1,0]$ on the other chains. 
Once obtained the interval $I_\alpha$ for the query string $\alpha$, query (iii) can be supported in $O(p)$ time by simply checking, via $\tt FINAL$, if $I_\alpha$ contains final states.
We now discuss the structure needed to perform forward search.
Using two bitvectors supporting constant-time rank and select operations \cite{raman2007succinct}, we can mark the boundaries between (i) each $\mathtt{OUT}_i$ and, inside those subsequences, (ii) between each list $\mathtt{OUT}_i[k]$. Similarly, two bitvectors delimit the boundaries between (iii) each $\mathtt{IN}_i$ and, inside those subsequences, (iv) between each list $\mathtt{IN}_i[k]$. 
Bitvectors (ii) and (iv) have length $|E|$ bits. 
Bitvectors (i) and (iii) have length $|Q|$ bits and are actually equal (they both encode the lengths of the chains), so we need to store just one of them (even though for clarity in the following we treat them separately).  
As observed above, we do not actually store the content of $\tt IN$: the only information we keep about this list is contained in bitvectors (iii) and (iv).
Bitvectors (i) and (ii) provide constant-time random access to any element $\mathtt{OUT}_i[k]$.
Bitvectors (iii) and (iv) are used to implement function $N_i(k)$ in constant time with one select operation (on bitvector (iii)) and one rank operation (on bitvector (iv)). The pairs of $\tt OUT$ are treated as meta-characters on the extended alphabet $[1,p]\times \Sigma$, and are encoded in binary using $\lceil \log p \rceil + \lceil \log\sigma\rceil$ bits. We concatenate these pairs in a sequence $\mathtt{OUT'}$ of length $|E|$ and build a wavelet tree \cite{grossi2003high} on $\mathtt{OUT'}$ using this binary encoding. 
Wavelet trees support counting the number of pairs of the form $(j,c)$  in any prefix $\mathtt{OUT'}_i[1,t]$ with a simple rank operation taking $O(\log p + \log \sigma) = O(\log (p\cdot \sigma))$ time. Finally, we need to show how to compute $C_{j,c}$ for any $1\leq j \leq p$ and $c\in\Sigma$. This reduces to a \emph{range counting query} on the wavelet tree. Each element $(j,c) = \mathtt{OUT'}[k]$ can be considered as a two-dimensional point $(k,(j,c))$ (again, treat $(j,c)$ as an integer obtained by concatenating the binary representations of $j$ and $c$ of length $\lceil \log p \rceil$ and $\lceil \log\sigma\rceil$ bits, respectively). 
Let $\mathtt{OUT'}[p_i,q_i]$ be the range corresponding to pairs from $\mathtt{OUT}_i$, and note that this range can be retrieved in constant time using bitvectors (i) and (ii). Let moreover $z_{i,j,c}$ denote
the number of 
pairs $(j,a)$ contained in $\mathtt{OUT}_i$ such that $a<c$, i.e. the number of points contained in the two-dimensional range $[p_i,q_i] \times [(j,a),(j,a')]$, where $a$ is the smallest element of $\Sigma$ and $a'$ is the largest element of $\Sigma$ such that $a'<c$. The wavelet tree supports range counting queries in $O(\log (p\cdot \sigma))$ time \cite{NAVARRO20142}. Then, $C_{j,c} = \sum_{i=1}^p z_{i,j,c}$ can be computed in $O(p\cdot \log (p\cdot \sigma))$ time. Our claimed query time follows. As far as the space usage of our index is concerned, all data structures that we used are \emph{succinct}, i.e. they only use a low-order number of bits on top of the information-theoretic minimum required to store the underlying data. Our thesis follows. \qed

\section{Proof of Theorem \ref{th:BWT of DFA}}\label{app:BWT of DFA}

\paragraph{\textbf{Statement}}
The BWT of a $p$-sortable DFA $ \mathcal{A} = (Q, E, \Sigma, s, F) $ can be stored as an invertible representation using $|E|(\lceil \log \sigma \rceil + \lceil\log p\rceil + 2) + |Q|$ bits.

\paragraph{\textbf{Proof}}
We store the same information of Theorem \ref{th:FM-index}, except bitvectors (i) and (iii) and the additional structures supporting constant-time rank and select on all the sequences (we do not need fast queries since we are just describing an encoding). The resulting encoding uses the claimed space. Note that the explicitly-stored in-degrees of the nodes and the information contained in $\tt OUT$ is sufficient to reconstruct bitvectors (i) and (iii): simply count how many edges exit the $i$-th chain, and use the in-degrees to reconstruct the chain decomposition. At this point it is sufficient to note that, being $\mathcal A$ deterministic, forward search yields a singleton $I_\alpha = \{u_\alpha\}$ for any string $\alpha \in Pref(\mathcal L(\mathcal A))$. This fact can be used to perform a visit of the underlying graph starting from the source, thus reconstructing the transition function. \qed

\section{Parameterized analysis of the powerset construction algorithm}\label{app:powerset}

First, we need to introduce some notation. Corollary \ref{cor:path-coherence} motivates the following definition.

\begin{definition}
Let $ \mathcal{A} = (Q, E, \Sigma, s, F) $ be a $ p $-sortable NFA. Fix any $ \le $-chain decomposition $ \{Q_i \}_{i = 1}^p $ of $ Q $. For $ \alpha \in Pref (\mathcal{L(\mathcal{A})}) $ and for $ i = 1, \dots, p $, let $ I_\alpha^i $ be the (possibly empty) $ \le_{Q_i} $-interval being the intersection between $ I_\alpha $ and $ Q_i $. Moreover, for every $ i = 1, \dots, p $ define:
\begin{equation*}
Pref (\mathcal{L(\mathcal{A})})^i = \{\alpha \in Pref (\mathcal{L(\mathcal{A}))}\ |\ I_\alpha^i \not = \emptyset \}
\end{equation*}
and:
\begin{equation*}
I_{Pref (\mathcal{L(\mathcal{A})})}^i = \{I_\alpha^i\ |\ \alpha \in Pref (\mathcal{L(\mathcal{A})})^i \}.
\end{equation*}
\end{definition}

\begin{remark}\label{rem2}
Notice that for every $ \alpha \in Pref (\mathcal{L(\mathcal{A})}) $ there exists at least one $ i $ such that $ I_\alpha^i \not = \emptyset $.
\end{remark}

We will also need some definitions from \cite{alanko20regular}:

\begin{definition}
Let $ (V, \le ) $ be a total order.
\begin{enumerate}
    \item Let $ I, J $ be $ \le $-intervals, and assume that $ I \subseteq J $.
    \begin{enumerate}
        \item We say that $ I $ is a prefix of $ J $ if $ (\forall x \in I)(\forall y \in J \setminus I)(x < y) $;
        \item We say that $ I $ is a suffix of $ J $ if $ (\forall x \in I)(\forall y \in J \setminus I)(y < x) $;
    \end{enumerate}
    \item A family $ \mathcal{C} $ of nonempty $ \le $-intervals is said to be a \emph{prefix/suffix familiy} if for all $ I, J \in \mathcal{C} $ such that $ I \subseteq J $ we have that $ I $ is either a prefix or a suffix of $ J $.
\end{enumerate}
\end{definition}

In particular, we will use the following result \cite{alanko20regular}.

\begin{lemma}\label{lem:4}
Let $ (V, \le) $ be a finite total order, and let $ \mathcal{C} $ be a prefix/suffix family of nonempty $ \le $-intervals in $ V $.
\begin{enumerate}
    \item $ |\mathcal{C}| \le 2 |V| - 1 $;
    \item If for every $ I, J \in \mathcal{C} $ we define:
    \begin{equation*}
        I <^* J \text{ if and only if } (\exists x \in I)(\forall y \in J)(x < y) \lor (\exists y \in J)(\forall x \in I)(x < y)
    \end{equation*}
    then $ (\mathcal{C}, \le^*) $ is a total order.
\end{enumerate}
\end{lemma}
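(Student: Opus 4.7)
The plan is to first rewrite $<^*$ as an equivalent condition on the endpoints of the intervals, then attack the two parts in turn.

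A direct unpacking of the quantifiers shows that $I <^* J$ is equivalent to $\min(I) < \min(J)$ or $\max(I) < \max(J)$ (for the first disjunct take $x = \min(I)$; for the second take $y = \max(J)$). Since $V$ is totally ordered, every nonempty $\le$-interval of $V$ is contiguous and hence uniquely determined by the pair $(\min,\max)$, so the map $I \mapsto (\min(I), \max(I))$ is injective on $\mathcal{C}$. Next, the prefix/suffix hypothesis translates to: if $I \subsetneq J$ then $\min(I) = \min(J)$ (prefix case) or $\max(I) = \max(J)$ (suffix case). Equivalently, the simultaneous occurrence of $\min(J) < \min(I)$ and $\max(I) < \max(J)$ is forbidden within $\mathcal{C}$.

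For part 1, I plot each $I \in \mathcal{C}$ as the point $(\min(I), \max(I)) \in V \times V$. The forbidden configuration above means that sorting these points by $\min$-coordinate yields a non-decreasing sequence of $\max$-coordinates. Writing $k$ for the number of distinct $\min$-values used, within each of the $k$ ``columns'' (fixed $\min$) the $\max$-values strictly increase, so the full $|\mathcal{C}|$-long non-decreasing sequence of $\max$-values contains at least $|\mathcal{C}| - k$ strict increases and therefore takes at least $|\mathcal{C}| - k + 1$ distinct values. Since distinct $\max$-values lie in $V$, this gives $|\mathcal{C}| - k + 1 \le |V|$, which combined with $k \le |V|$ yields the desired $|\mathcal{C}| \le 2|V| - 1$.

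For part 2, set $\le^* := <^* \cup \{(I,I) : I \in \mathcal{C}\}$. Reflexivity is built-in. Totality is immediate: for distinct $I, J$, injectivity of the endpoint map forces disagreement on some coordinate, which gives $<^*$-comparability. Antisymmetry follows because $I <^* J$ together with $J <^* I$ either contradicts a single-coordinate order (e.g.\ $\min(I) < \min(J)$ and $\min(J) < \min(I)$) or produces a strict two-sided containment between $I$ and $J$, which is ruled out by the prefix/suffix hypothesis. Transitivity is the substantive step: after trivial single-coordinate sub-cases, the only mixed situation is (up to symmetry) $\min(I) < \min(J)$ together with $\max(J) < \max(K)$. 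If the conclusion $I <^* K$ failed, I would have $\min(I) \ge \min(K)$ and $\max(I) \ge \max(K)$; combined with the premises these force $\min(K) < \min(J)$ and $\max(J) < \max(K)$, i.e.\ $J \subsetneq K$ with both endpoints strict --- forbidden.

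The main obstacle is transitivity: the prefix/suffix hypothesis is essential there (without it $<^*$ can even cycle, as is easy to see on $V = \{1,\dots,5\}$ with $I=[1,5]$, $J=[2,3]$, $K=[1,4]$), so the argument must invoke the hypothesis precisely in the mixed sub-cases. A secondary subtlety in part 1 is obtaining the tight bound $2|V| - 1$ rather than a naive $2|V|$; this comes from counting strict increases in the non-decreasing $\max$-sequence, which pays a $-1$ for the ``shared boundary'' values between consecutive columns.
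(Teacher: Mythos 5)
Your proof is correct. One point of context: this paper does not actually prove Lemma \ref{lem:4} --- it imports the statement verbatim from Alanko et al.\ \cite{alanko20regular} --- so there is no in-paper argument to compare against; your write-up is a correct, self-contained substitute. The key reductions all check out: $I <^* J$ is indeed equivalent to $\min(I) < \min(J) \lor \max(I) < \max(J)$; the prefix/suffix hypothesis is exactly the statement that no $I, J \in \mathcal{C}$ satisfy $\min(J) < \min(I)$ and $\max(I) < \max(J)$ simultaneously (strict nesting of both endpoints); and the endpoint map is injective on nonempty intervals of a total order. Part 1's staircase count is right: sorting by $\min$ gives a non-decreasing $\max$-sequence with at least $|\mathcal{C}| - k$ strict increases, hence at least $|\mathcal{C}| - k + 1$ distinct values in $V$, giving $|\mathcal{C}| \le |V| + k - 1 \le 2|V| - 1$. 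Part 2's case analysis is complete: totality from injectivity, antisymmetry and the two mixed transitivity cases each reducing to the forbidden double-strict nesting. The counterexample $I = [1,5]$, $J = [2,3]$, $K = [1,4]$ correctly shows the hypothesis is needed to kill $<^*$-cycles.
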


We are interested in prefix/suffix families because the following lemma shows that $I_{Pref (\mathcal{L(\mathcal{A})})}^i $ is a prefix/suffix family for every $ i $.

\begin{lemma}\label{lem:prefix-suffix family}
Let $ \mathcal{A} = (Q, E, \Sigma, s, F) $ be a $ p $-sortable NFA, and let $ \{Q_i \}_{i = 1}^p $ be a $ \le $-chain decomposition of $ Q $. Then, $ I_{Pref (\mathcal{L(\mathcal{A})})}^i $ is a prefix/suffix family of $ \le $-intervals in $ (Q_i, \le) $, for every $ i = 1, \dots, p $.
\end{lemma}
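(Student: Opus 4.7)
The plan is to reduce the claim to a direct application of Lemma \ref{lem:string-nodes}, exploiting the defining property of a chain: any two distinct elements of $Q_i$ are $\le$-comparable, so the disjunction "$u \| v$ or $u < v$" appearing in the lemma collapses to a strict inequality whenever both states lie in $Q_i$.

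Fix $\alpha, \beta \in Pref(\mathcal{L(\mathcal{A})})^i$ with $I_\alpha^i \subseteq I_\beta^i$. If $\alpha = \beta$, then $I_\alpha^i = I_\beta^i$ and the intervals coincide (so $I_\alpha^i$ is trivially both a prefix and a suffix of $I_\beta^i$). Otherwise, by totality of the co-lex order on $\Sigma^*$, either $\alpha < \beta$ or $\beta < \alpha$. I would treat the first case and derive that $I_\alpha^i$ is a prefix of $I_\beta^i$; the other case is entirely symmetric and yields the suffix property.

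Suppose $\alpha < \beta$. Pick arbitrary $u \in I_\alpha^i$ and $v \in I_\beta^i \setminus I_\alpha^i$. Since $v \in Q_i$ and $v \notin I_\alpha^i = I_\alpha \cap Q_i$, we have $v \notin I_\alpha$, so in particular $\{u,v\} \not\subseteq I_\alpha \cap I_\beta$, verifying the hypothesis of Lemma \ref{lem:string-nodes}. That lemma then gives $u \| v$ or $u < v$; because $u, v$ both lie in the $\le$-chain $Q_i$, the incomparability option is impossible and $u < v$ follows. Since $u$ and $v$ were arbitrary, every element of $I_\alpha^i$ is $\le$-below every element of $I_\beta^i \setminus I_\alpha^i$, which is exactly the definition of $I_\alpha^i$ being a prefix of $I_\beta^i$ in $(Q_i, \le)$. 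A symmetric argument with the roles of $\alpha$ and $\beta$ swapped handles the case $\beta < \alpha$ and shows $I_\alpha^i$ is a suffix of $I_\beta^i$.

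I do not expect any substantive obstacle here: the whole proof is a two-line deduction once Lemma \ref{lem:string-nodes} is in hand. The only point deserving care is checking the premise $\{u,v\} \not\subseteq I_\alpha \cap I_\beta$, which is why the argument picks $v$ from $I_\beta^i \setminus I_\alpha^i$ rather than from $I_\beta^i$ outright: this choice forces $v \notin I_\alpha$ via the identity $I_\alpha^i = I_\alpha \cap Q_i$, and the rest is mechanical.
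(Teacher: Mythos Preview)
Your proof is correct and follows essentially the same route as the paper: both reduce the prefix/suffix property to Lemma \ref{lem:string-nodes} together with the observation that incomparability is impossible inside a chain. The only cosmetic difference is that the paper argues by contradiction (assume neither prefix nor suffix, then use part 2 of Lemma \ref{lem:string-nodes} to derive both $\alpha<\beta$ and $\beta<\alpha$), whereas you case-split on the co-lex order of $\alpha$ and $\beta$ and apply part 1 directly; these are dual formulations of the same step.
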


\begin{proof}
By definition $ I_{Pref (\mathcal{L(\mathcal{A})})}^i $ is a family of non-empty $ \le $-intervals in $ (Q_i, \le) $. Now assume that $ I^i_\alpha \subseteq I_\beta^i $ and suppose by contradiction that $ I^i_\alpha $ is neither a prefix nor a suffix of $ I_\beta^i $. Since $ I^i_\alpha $ is not a prefix of $ I_\beta^i $, then there exists $ u, v, \in Q_i $ such that $ u \in I^i_\alpha $, $ v \in I^i_\beta \setminus I_\alpha^i $ and $ v < u $. In particular $ u \in I_\alpha $, $ v \in I_\beta $ and $ \{u, v \} \not \subseteq I_\alpha \cap I_\beta $, so by Lemma \ref{lem:string-nodes} we conclude $ \beta < \alpha $. Similarly, using that $ I^i_\alpha $ is not a suffix of $ I_\beta^i $, we conclude $ \alpha < \beta $, a contradiction. \qed
\end{proof}

Lemma \ref{lem:prefix-suffix family} allows us to introduce the following definition.

\begin{definition}\label{definition1}
Let $ \mathcal{A} = (Q, E, \Sigma, s, F) $ be a $ p $-sortable NFA, and let $ \{Q_i \}_{i = 1}^p $ be a $ \le $-chain decomposition of $ Q $. We denote by $ \le^i $ the total order on $ I_{Pref (\mathcal{L(\mathcal{A})})}^i $ built in Lemma \ref{lem:4}.
\end{definition}

Similarly to Lemma \ref{lem:string-nodes}, we can now prove that $ <^i $ respects the co-lexicographic order on $ \Sigma^* $.

\begin{lemma}\label{lem:1}
Let $ \mathcal{A} = (Q, E, \Sigma, s, F) $ be a $ p $-sortable CNFA, and let $ \{Q_i\}_{i = 1}^p $ be a $ \le $-chain decomposition of $ Q $. For some $ i $, let $ I_\alpha^i, I_\beta^i \in I_{Pref (\mathcal{L(\mathcal{A})})}^i $.
\begin{enumerate}
    \item If $ I_\alpha^i <^i I_\beta^i $, then $ \alpha < \beta $.
    \item If $ \alpha < \beta $, then  $ I_\alpha^i \le^i I_\beta^i $.
\end{enumerate}
\end{lemma}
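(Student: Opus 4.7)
The plan is to unfold the definition of the total order $\leq^i$ established in Lemma~\ref{lem:4} and to apply Lemma~\ref{lem:string-nodes} to suitably chosen witness states from $I_\alpha^i$ and $I_\beta^i$.

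For part (1), assume $I_\alpha^i <^i I_\beta^i$. By the definition of $<^i$, either (a) there exists $u \in I_\alpha^i$ with $u < v$ for every $v \in I_\beta^i$, or (b) there exists $v \in I_\beta^i$ with $u < v$ for every $u \in I_\alpha^i$. In case (a), observe that $u$ cannot itself belong to $I_\beta^i$, otherwise we would obtain $u < u$; hence $u \in I_\alpha \setminus I_\beta$. Picking any $v \in I_\beta^i \subseteq I_\beta$, we conclude $\{u,v\} \not\subseteq I_\alpha \cap I_\beta$ and $u < v$, so Lemma~\ref{lem:string-nodes}(2) yields $\alpha < \beta$. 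Case (b) is symmetric: the witness $v$ must lie in $I_\beta \setminus I_\alpha$, and the same application of Lemma~\ref{lem:string-nodes}(2) to any $u \in I_\alpha^i$ gives $\alpha < \beta$.

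For part (2), invoke the totality of $\leq^i$ on $I_{Pref (\mathcal{L(\mathcal{A})})}^i$ from Lemma~\ref{lem:4}(2): exactly one of $I_\alpha^i = I_\beta^i$, $I_\alpha^i <^i I_\beta^i$, and $I_\beta^i <^i I_\alpha^i$ holds. The first two possibilities directly deliver the desired $I_\alpha^i \leq^i I_\beta^i$. The third one, combined with part (1) applied with the roles of $\alpha$ and $\beta$ swapped, would force $\beta < \alpha$, contradicting the hypothesis $\alpha < \beta$.

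The only mildly delicate point is the potential overlap $I_\alpha \cap I_\beta$, which Lemma~\ref{lem:string-nodes} requires us to avoid when exhibiting comparable witnesses. The key observation that removes this obstacle is that the asymmetric formulation of $<^i$ automatically supplies a witness lying strictly outside the opposite interval on the chain $Q_i$, so the side condition $\{u,v\} \not\subseteq I_\alpha \cap I_\beta$ in Lemma~\ref{lem:string-nodes} is obtained for free, and no further case analysis is needed.
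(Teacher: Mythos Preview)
Your proof is correct and follows essentially the same approach as the paper's: in part~(1) you unfold the definition of $<^i$, extract a witness pair $u<v$ with one of them outside the opposite interval (the paper phrases this as ``either $u\notin I_\beta^i$ or $v\notin I_\alpha^i$'', which via $u,v\in Q_i$ gives exactly your $\{u,v\}\not\subseteq I_\alpha\cap I_\beta$), and invoke Lemma~\ref{lem:string-nodes}(2); part~(2) is the same contrapositive via totality of $\leq^i$. Your version is slightly more explicit about why $u\notin I_\beta^i$ upgrades to $u\notin I_\beta$, but the argument is otherwise identical.
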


\begin{proof}
\begin{enumerate}
    \item $ I_\alpha <^i I_\beta $ implies that at least one of the following is true:
    \begin{enumerate}
        \item $ (\exists u \in I_\alpha^i)(\forall v \in I_\beta^i)(u < v) $;
        \item $ (\exists v \in I_\beta^i)(\forall u \in I_\alpha^i)(u < v) $.
    \end{enumerate}
This implies that there exist $ u \in I_\alpha^i $, $ v \in I_\beta^i $ such that $ u < v $ and either $ u \not \in I_\beta^i $ or $ v \not \in I_\alpha^i $. Hence $ u \in I_\alpha $, $ v \in I_\beta $ and $ \{u, v \} \not \subseteq I_\alpha \cap I_\beta $, so by Lemma \ref{lem:string-nodes}, we conclude $ \alpha \prec \beta $.
\item If it were $ I_\beta^i <^i I_\alpha^i $, then by the previous part it would be $ \beta \prec \alpha $. \qed
\end{enumerate}
\end{proof}

Let $ \mathcal{A} = (Q, E, \Sigma, s, F) $ be a $ p $-sortable NFA and let $ \mathcal{A^*} = (Q^*, E^*, \Sigma, s^*, F^*) $ be the DFA obtained from $ \mathcal{A} $ by powerset construction. We have already observed that the elements of $ Q^* $ must be the union of at most $ p $ intervals. Lemma \ref{lem:1} provides a further restriction to the elements of $ Q^* $, because if $ \alpha < \beta $, then  $ I_\alpha^i \le^i I_\beta^i $ must hold \textit{for every $ i $}. We now have all the elements required to prove our parameterization of the number of states of $ Q^* $.

\paragraph{\textbf{Statement of Theorem \ref{th:powerset}}}
Let $ \mathcal{A} = (Q, E, \Sigma, s, F) $ be a $ p $-sortable NFA, and let $ \mathcal{A^*} = (Q^*, E^*, \Sigma, s^*, F^*) $ be the DFA obtained from $ \mathcal{A} $ by the powerset construction algorithm. Then, $ |Q^*| \le 2^p (|Q| - p + 1) - 1 $.

\paragraph{\textbf{Proof}}
Let $ \{Q_i \}_{i = 1}^p $ be a $ \le $-chain decomposition of $ Q $. For every $ \alpha \in I_{Pref (\mathcal{L(\mathcal{A})})} $, define the $ p $-tuple:
\begin{equation*}
T_\alpha := (I_\alpha^1, \dots, I_\alpha^p)
\end{equation*}
where $ I_\alpha^i \not = \emptyset $ for at least one $ i $ (see Remark \ref{rem2}). If we define:
\begin{equation*}
    T = \{T_\alpha\ |\ \alpha \in I_{Pref (\mathcal{L(\mathcal{A})})} \}
\end{equation*}
then $ |Q^*| = |T | $.

For every nonempty $ K \subseteq \{1, \dots, p \} $, let $ T_K $ be the set of all $ T_\alpha $'s such that $ \alpha \in I_{Pref (\mathcal{L(\mathcal{A})})} $ and $ I_\alpha^i \not = \emptyset $ if and only if $ i \in K $. Clearly, $ T $ is the disjoint union of all $ T_K $'s, that is:
\begin{equation*}
    T = \bigsqcup_{\emptyset \subsetneqq K \subseteq \{1, \dots, p \}} T_K.
\end{equation*}

In the following, if $ I_\alpha^i \not = \emptyset $ (or equivalently, $ I_\alpha^i \in I_{Pref (\mathcal{L(\mathcal{A})})}^i) $, we identify $ I_\alpha^i $ with its position in the total order $ \le^i $ (see definition \ref{definition1}), hence $ I_\alpha^i $ will be an integer between $ 1 $ and $ |I_{Pref (\mathcal{L(\mathcal{A})})}^i| $.

Fix $ \emptyset \subsetneqq K \subseteq \{1, \dots, p \} $. Pick $ T_\alpha, T_\beta \in T_K $. Notice that there are no $ i, j \in K $ such that $ I_\alpha^i < I_\beta^i $ and $ I_\alpha^j > I_\beta^j $, because Lemma \ref{lem:1} would imply $ \alpha < \beta $ and $ \beta < \alpha $, a contradiction. This means that for every integer $ s $ there exists at most one element $ T_\alpha $ in $ T_K $ such that $ \sum_{i \in K} I_\alpha^i = s $, because if there existed $ T_\alpha, T_\beta \in T_K $, with $ T_\alpha \not = T_\beta $, such that $ \sum_{i \in K} I_\alpha^i = \sum_{i \in K} I_\beta^i $, then there would exist $ i, j \in K $ such that $ I_\alpha^i < I_\beta^i $ and $ I_\alpha^j > I_\beta^j $. Since for every $ T_\alpha \in T_K $ it holds $ |K| \le \sum_{i \in K} I_\alpha^i \le \sum_{i \in K} |I_{Pref (\mathcal{L(\mathcal{A})})}^i| $, we conclude:
\begin{equation*}
|T_K| \le (\sum_{i \in K} |I_{Pref (\mathcal{L(\mathcal{A})})}^i| ) - |K| + 1.
\end{equation*}
We can then write:
\begin{equation*}
|Q^*| = |T| = \sum_{\emptyset \subsetneqq K \subseteq \{1, \dots, p \}} |T_K| \le \sum_{\emptyset \subsetneqq K \subseteq \{1, \dots, p \}} ((\sum_{i \in K} |I_{Pref (\mathcal{L(\mathcal{A})})}^i| ) - |K| + 1)
\end{equation*}
Notice that $ \sum_{\emptyset \subsetneqq K \subseteq \{1, \dots, p \}} \sum_{i \in K} |I_{Pref (\mathcal{L(\mathcal{A})})}^i| = 2^{p - 1} \sum_{i = 1}^p |I_{Pref (\mathcal{L(\mathcal{A})})}^i| $ because every $ i \in \{1, \dots, p \} $ occurs in exactly $ 2^{p - 1} $ subsets of $ \{1, \dots, p \} $. Similarly, we have $ \sum_{\emptyset \subsetneqq K \subseteq \{1, \dots, p \}} |K| = 2^{p - 1} p $. Hence:
\begin{equation*}
    |Q^*| \le (2^{p - 1} \sum_{i = 1}^p |I_{Pref (\mathcal{L(\mathcal{A})})}^i|) - p 2^{p - 1} + 2^{p} - 1.
\end{equation*}
By Lemma \ref{lem:prefix-suffix family} we know that $ I_{Pref (\mathcal{L(\mathcal{A})})}^i $ is a prefix/suffix family of $ \le $-intervals in $ (Q_i, \le) $, so by Lemma \ref{lem:4} we obtain that $ |I_{Pref (\mathcal{L(\mathcal{A})})}^i| \le 2 |Q_i| - 1 $. We conclude:
\begin{equation*}
    |Q^*| \le (2^{p - 1} \sum_{i = 1}^p (2 |Q_i| - 1)) - p 2^{p - 1} + 2^{p} - 1 = 2^p |Q| - p2^{p - 1} - p 2^{p - 1} + 2^{p} - 1 = 2^p (|Q| - p + 1) - 1.
\end{equation*} 
which proves our claim. \qed

\section{DFAs admit a unique maximal co-lexicographic order}\label{app:DFA unique}

We aim to prove that a DFA admits a unique maximal co-lexicographic order. First, we prove the following lemma, which can be used to identify a co-lexicographic order.

\begin{lemma}\label{lem:2}
Let $ \mathcal{A} = (Q, E, \Sigma, s, F) $ be an NFA, and let $ \le $ be a reflexive and antisymmetric relation on $ V $ that satisfies the following properties:
\begin{enumerate}
    \item For every $ u, v \in Q $, if $\lambda(u) < \lambda(v) $, then $ u < v $;
    \item For all edges $ (u', u), (v', v) \in E $ such that $ \lambda (u) = \lambda (v) $, if $ u < v $, then $ u' \leq v' $.
\end{enumerate}
Let $ \le^* $ be the transitive closure of $ \le $, and assume that $ \le^* $ is antisymmetric. Then, $ \le^* $ is a co-lexicographic order of $ \mathcal{A} $.
\end{lemma}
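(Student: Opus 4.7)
The plan is to verify the three partial-order properties for $\le^*$ (reflexivity, antisymmetry, transitivity) together with the two axioms of Definition \ref{def:co-lex order}. Most of these are essentially free: reflexivity of $\le^*$ inherits from that of $\le$; transitivity holds by construction of the transitive closure; antisymmetry of $\le^*$ is the explicit hypothesis. Axiom 1 is also immediate: if $\lambda(u) < \lambda(v)$, property 1 yields $u < v$ in $\le$, so $u \le^* v$ via $\le \subseteq \le^*$, and $u \ne v$ upgrades this to $u <^* v$.

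The heart of the argument is Axiom 2. Suppose $(u',u),(v',v) \in E$ with $\lambda(u)=\lambda(v)$ and $u <^* v$. Unfolding the transitive closure, I take a chain $u = w_0 \le w_1 \le \cdots \le w_k = v$ in $\le$, and by collapsing repetitions via reflexivity I may assume each step is strict, $w_i < w_{i+1}$. The crucial observation is that every $w_i$ carries the same label $\lambda(u)$. Indeed, along each strict step it must hold that $\lambda(w_i) \le \lambda(w_{i+1})$: otherwise property 1 would force $w_{i+1} < w_i$, contradicting antisymmetry of $\le$. Hence the sequence of labels is weakly increasing, and since the endpoints agree, it is constant. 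Moreover $\lambda(u) \ne \#$, because $u$ has an incoming edge and $s$ is the only state labeled $\#$, so no $w_i$ is the source and each $w_i$ admits at least one incoming edge. Pick any predecessors $w_i'$ with $(w_i',w_i) \in E$ for $0<i<k$, and set $w_0' := u'$, $w_k' := v'$. Applying property 2 to each consecutive pair $(w_i',w_i),(w_{i+1}',w_{i+1})$ — which is legitimate since $\lambda(w_i)=\lambda(w_{i+1})$ and $w_i<w_{i+1}$ — yields $w_i' \le w_{i+1}'$, and stringing these together in $\le^*$ gives $u' \le^* v'$.

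The delicate step is precisely the label-constancy claim: it relies on combining property 1 with antisymmetry of the original relation $\le$ in the correct direction, to rule out the label decreasing along any step of the chain. Once this is established, the rest of Axiom 2 is a routine inductive lift of the chain to a chain of predecessors via property 2. Everything else in the lemma is either an explicit hypothesis or an immediate formal consequence of $\le \subseteq \le^*$.
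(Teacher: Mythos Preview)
Your proof is correct and follows essentially the same approach as the paper's: both unfold the transitive closure into a chain, establish that all intermediate states share the common incoming label, pick predecessors for them, and lift property~2 step by step to obtain $u' \le^* v'$. The only cosmetic difference is in the label-constancy argument: you use antisymmetry of $\le$ locally on consecutive links to show the labels are weakly increasing (hence constant), whereas the paper uses antisymmetry of $\le^*$ globally to rule out any intermediate label differing from the endpoints.
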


\begin{proof}
First, $ \le^* $ is reflexive because $ \le $ is reflexive, so $ \le^* $ is a partial order. Morover, $ \le^* $ satisfies Axiom 1, because if $ u, v \in Q $ are such that $ \lambda (u) < \lambda (v) $, then $ u < v $ and so $ u <^* v $. So we just have to prove that Axiom 2 is satisfied. Consider two edges $ (u', u), (v', v) \in E $ such that $ \lambda (u) = \lambda (v) $ and $ u <^* v $; we must prove that $ u' \le^* v' $. Since $ \le^* $ is the transitive closure of $ \le $, there exist states $ z_1, \dots, z_r $ ($ r \ge 0 $) such that $ u < z_1 $, $ z_1 < z_ 2 $, $ \dots $, $ z_r < v $, and in particular $ u <^* z_1 <* z_2 <^* \dots <^* z_ r <^* v $. Since $ \lambda (u) = \lambda (v) $, then $ \lambda (u) = \lambda (z_1) = \dots = \lambda (z_r) = \lambda (v) $. Indeed, if for some $ j $ it were for example $ \lambda (z_j) > \lambda (u) = \lambda (v) $, then it should be $ v < z_j $ and so $ v <^* z_j $, which contradicts $ z_j <^* v $. In particular, since $ u $ and $ v $ have ingoing edges, then even all $ z_i$'s have ingoing edges $ (z'_i, z_i) \in E $, for $ i = 1, \dots, k $. The second assumption implies that $ u' \le z'_1 $, $ z'_1 \le z'_2 $, $ \dots $, $ z'_k \le v' $, so $ u' \le^* z'_1 \le^* z'_2 \dots z'_k \le^* v' $ and we conclude $ u' \le^* z' $. \qed
\end{proof}

The task of determining a co-lexicographic order of smallest width can be simplified if every co-lexicographic order must be the restriction of some total order on the set of states. This motivates the following definition.

\begin{definition}
Let $ \mathcal{A} = (Q, E, \Sigma, s, F) $ be an NFA. We say that a total order $ \le_\# $ on Q is an \emph{underlying order} of $ \mathcal{A} $ if for every co-lexicographic order $ \le $ of $ \mathcal{A} $:
\begin{equation*}
    u < v \implies u <_\# v \quad \text{ $ \forall  u, v \in Q $}.
\end{equation*}
\end{definition}

%\textcolor{blue}{Non mi è venuto in mente un modo semplice che non complicasse eccessivamente la faccenda per considere un ordine canonico. Comunque dovrebbe il discorso dovrebbe filare bene così.} \textcolor{red}{perfetto! la canonicità la discutiamo a parole}

In general an NFA does not admit an underlying order: simply consider a source $ s $ connected by the same label to $ n $ non-adjacent states. However, for DFAs we have the following result:

\begin{lemma}\label{lem:underlying order DFA}
Let $ \mathcal{A} = (Q, E, \Sigma, s, F) $ be a DFA. Then, $ \mathcal{A} $ admits an underlying order $ <_\# $. In particular, for every $ u, v \in Q $, if $ \lambda (u) < \lambda (v) $, then $ u <_\# v $.
\end{lemma}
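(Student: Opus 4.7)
The plan is to construct $\leq_\#$ by sorting the states according to the co-lexicographic order of the labels of the paths in any fixed spanning tree rooted at $s$, and then to use determinism (which makes $I_\alpha$ a singleton for every $\alpha \in Pref(\mathcal L(\mathcal A))$) to invoke Lemma \ref{lem:string-nodes} cleanly.

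First I would fix a spanning tree $T$ of $\mathcal A$ rooted at $s$, which exists because every state is reachable from $s$. For each $v \in Q$, let $\alpha_v \in \Sigma^*$ denote the label of the (unique) path from $s$ to $v$ in $T$; in particular $\alpha_s = \epsilon$. The key observation, which relies on $\mathcal A$ being a DFA, is that the map $v \mapsto \alpha_v$ is injective: if $\alpha_u = \alpha_v$ then by determinism both paths end at the same state, so $u=v$. Hence defining $u \leq_\# v$ iff $\alpha_u \leq \alpha_v$ (co-lexicographically) yields a \emph{total} order on $Q$.

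Next I would verify property (1): if $\lambda(u) < \lambda(v)$ then $u <_\# v$. If $u \neq s$ and $v \neq s$, the strings $\alpha_u$ and $\alpha_v$ end with $\lambda(u)$ and $\lambda(v)$ respectively, so $\alpha_u < \alpha_v$ co-lexicographically. If $u = s$, then $\alpha_u = \epsilon$ and, by the convention $\# < c$ for every $c \in \Sigma$, we still have $\alpha_u < \alpha_v$; the case $v = s$ cannot occur since $\lambda(s) = \#$ is minimal.

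The final and main step is to check that $\leq_\#$ is an underlying order, i.e.\ that every co-lexicographic order $\leq$ on $Q$ is contained in $\leq_\#$. Let $\leq$ be such an order and assume $u < v$. Since $\mathcal A$ is deterministic, $I_{\alpha_u} = \{u\}$ and $I_{\alpha_v} = \{v\}$, so $I_{\alpha_u}\cap I_{\alpha_v} = \emptyset$ (as $u \neq v$), and in particular $\{u,v\} \not\subseteq I_{\alpha_u} \cap I_{\alpha_v}$. The hypotheses of Lemma \ref{lem:string-nodes}(2) are therefore met, giving $\alpha_u < \alpha_v$ and hence $u <_\# v$, as required.

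The only real obstacle is recognizing that determinism is precisely what makes Lemma \ref{lem:string-nodes} directly applicable here: in the NFA case the sets $I_\alpha$ may contain several states, and the hypothesis $\{u,v\} \not\subseteq I_\alpha \cap I_\beta$ can fail, which is exactly why an NFA need not admit an underlying order (as the paper illustrates with the source connected by equal labels to many non-adjacent states). Once this is observed, everything reduces to the injectivity of $v \mapsto \alpha_v$ and a direct application of Lemma \ref{lem:string-nodes}.
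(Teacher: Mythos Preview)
Your proof is correct and follows essentially the same approach as the paper: build a spanning tree rooted at $s$, assign to each state the label of its tree path, use determinism to conclude these labels are pairwise distinct, sort by them, and invoke Lemma~\ref{lem:string-nodes} to show every co-lexicographic order is contained in the resulting total order. The only cosmetic difference is that the paper derives the final clause ($\lambda(u)<\lambda(v)\Rightarrow u<_\# v$) by applying Axiom~1 of some existing co-lexicographic order and then the underlying-order property, whereas you argue it directly from the last characters of $\alpha_u,\alpha_v$; both are fine.
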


\begin{proof}
Let $ Q = \{u_1, \dots, u_n \} $. For every $ i = 1, \dots, n $ let $ \alpha_i \in Pref (\mathcal{L(\mathcal{A})}) $ be such that $ q_i \in I_{\alpha_i} $. Intuitively, the $ \alpha_i $'s can be determined by building a directed spanning tree of $ \mathcal{A} $ with root $ s $. Since $ \mathcal{A} $ is a DFA, then $ \alpha_1, \dots, \alpha_n $ are pairwise distinct. Without loss of generality, assume $ \alpha_1 < \alpha_2 < \dots < \alpha_n $. Lemma \ref{lem:string-nodes} implies that, for every co-lexicographic order $ \le $ of $ \mathcal{A} $, if $ u_i $ and $ u_j $, with $ i < j $, are $ \le $-comparable, then it must be $ u_i < u_j $. As a consequence, if $ \le_\# $ is the total order on $ Q $ such that $ u_1 <_\# u_2 <_\# \dots <_\# u_n $, then $ \le_\# $ is an underlying order of $ \mathcal{A } $. Since every automaton admits a co-lexicographic order, the final statement follows from Axiom 1. \qed
\end{proof}

\begin{remark}
In general, $ \le_\# $ is not unique, see figure \ref{fig:total order}. Nonetheless, we have uniqueness in the following sense: if there exists a co-lexicographic order $ \le $ for which $ u $ and $ v $ are $ \le $-comparable, then the mutual order of $ u $ and $ v $ with respect to $ \le_\# $ is uniquely determined. This is consistent with figure \ref{fig:total order}: for every co-lexicographic order $ \le $, states $ q_1 $ and $ q_2 $ cannot be $ \le $-comparable by Axiom 2.
\end{remark}

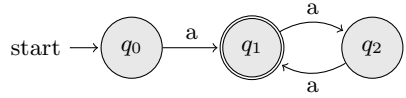
\begin{figure}[h!]
\centering
	\centering
	\begin{tikzpicture}[shorten >=1pt,node distance=1.6cm,on grid,auto]
	\tikzstyle{every state}=[fill={rgb:black,1;white,10}]
	
	\node[state,initial]   (q_0)                    {$ q_0 $};
	\node[state,accepting] (q_1)  [right of=q_0]    {$ q_1 $};
	\node[state]           (q_2)  [right of=q_1]    {$ q_2 $};
	
	\path[->]
	(q_0) edge node {a}    (q_1)
	(q_1) edge [bend left] node {a}    (q_2)
	(q_2) edge [bend left] node {a}    (q_1);
	\end{tikzpicture}
\caption{Observe that $ q_1 \in I_{a} $ and $ q_2 \in I_{aa} $, so we can let $ q_1 <_\# q_2 $. On the other hand, $ q_1 \in I_{aaa} $ and $ q_2 \in I_{aa} $, so we can also let $ q_2 <_\# q_1 $.}\label{fig:total order}
\end{figure}

%\textcolor{red}{$\leq^*$ che definisci all'interno del prossimo teorema va messo qui in una definizione a parte, perchè poi in seguito lo usi spesso. Inoltre, forse è meglio dargli un altro nome perchè usi $\leq^*$ anche sopra. Darei anche l'intuizione dello spanning tree: in questo modo si vede subito cos'è $\leq^*$}

We can now prove that a DFA admits a unique maximal co-lexicographic order.

\begin{theorem}\label{theor2}
Let $ \mathcal{A} = (Q, E, \Sigma, s, F) $ be a DFA. Then, $ \mathcal{A} $ admits a unique maximal co-lexicographic order.
\end{theorem}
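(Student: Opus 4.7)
The plan is to build the unique maximal co-lexicographic order explicitly as (essentially) the union of all co-lexicographic orders on $Q$. The two main ingredients are Lemma \ref{lem:2} (giving a sufficient condition, via transitive closure, for a reflexive antisymmetric relation to upgrade to a co-lexicographic order) and Lemma \ref{lem:underlying order DFA} (providing a total underlying order $\leq_{\#}$ that refines every co-lexicographic order of $\mathcal{A}$, since $\mathcal{A}$ is a DFA).

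First I would let $\mathcal{O}$ be the family of all co-lexicographic orders of $\mathcal{A}$, set $R := \bigcup_{\leq\, \in\, \mathcal{O}} \leq$ as a relation on $Q$, and let $\leq^*$ denote its transitive closure. By Lemma \ref{lem:underlying order DFA}, every element of $\mathcal{O}$ is contained in $\leq_{\#}$, hence $R \subseteq\ \leq_{\#}$; since $\leq_{\#}$ is already transitive we obtain $\leq^*\ \subseteq\ \leq_{\#}$, and antisymmetry of $\leq_{\#}$ propagates to $\leq^*$.

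Next I would verify that $R$ satisfies the hypotheses of Lemma \ref{lem:2}. Reflexivity is immediate, since every order in $\mathcal{O}$ is reflexive. Property 1 holds because $\lambda(u)<\lambda(v)$ forces $u<v$ in every element of $\mathcal{O}$ by Axiom 1, and hence in $R$. For Property 2, suppose $(u',u),(v',v)\in E$ with $\lambda(u)=\lambda(v)$ and $u\, R\, v$ with $u\neq v$; then there exists some $\leq\,\in\mathcal{O}$ with $u<v$, and Axiom 2 for that specific order gives $u'\leq v'$, which in turn implies $u'\, R\, v'$. Combined with antisymmetry of $\leq^*$ proved above, Lemma \ref{lem:2} yields that $\leq^*$ is itself a co-lexicographic order of $\mathcal{A}$, so $\leq^*\ \in \mathcal{O}$.

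Finally, by construction every $\leq\,\in\mathcal{O}$ is contained in $R$, and $R\subseteq\,\leq^*$, so $\leq^*$ refines every co-lexicographic order of $\mathcal{A}$. In particular, $\leq^*$ is a maximal element of $\mathcal{O}$, and any other maximal order, being refined by $\leq^*$, must coincide with $\leq^*$. Hence $\leq^*$ is the unique maximal co-lexicographic order.

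The main obstacle — and the point where the DFA hypothesis is essential — is the antisymmetry step. For general NFAs, two distinct co-lexicographic orders may disagree on a pair (one declaring $u<v$, the other $v<u$), and the union would then fail to be antisymmetric, making the transitive closure collapse; the example of a single source connected by the same label to several pairwise non-adjacent states, mentioned in the remark preceding Theorem \ref{th:complexity co-lex DFA}, exhibits exactly this phenomenon. Lemma \ref{lem:underlying order DFA} is precisely what forbids such disagreements for DFAs, and hence is what makes the whole argument go through.
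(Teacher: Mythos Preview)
Your proof is correct and follows essentially the same approach as the paper: both use the underlying order $\leq_{\#}$ from Lemma \ref{lem:underlying order DFA} to guarantee antisymmetry, then apply Lemma \ref{lem:2} to the union of co-lexicographic orders to produce a common refinement. The only cosmetic difference is that the paper unions two arbitrary maximal orders $\leq_1,\leq_2$ and concludes they coincide, while you union \emph{all} co-lexicographic orders at once and exhibit the maximum directly; the underlying mechanism is identical.
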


\begin{proof}
By Lemma \ref{lem:4}, $ \mathcal{A} $ admits an underlying order $ \le_\# $. Consider two maximal co-lexicographic orders $ \le_1 $, $ \le_2 $ of $ \mathcal{A} $; we want to prove that $ \le_1 $ and $ \le_2 $ are equal. Let $ \le_3 $ be the union of $ \le_1 $ and $ \le_2 $ (that is, $ u \le_3 v $ if and only if $ (u \le_1 v) \lor (u \le_2 v) $) and let $ \le_4 $ be the transitive closure of $ \le_3 $. Then $ \le_3 $ is reflexive (because e.g. $ \le_1 $ is reflexive); moreover, $ \le_3 $ and $ \le_4 $ are antisymmetric (because they are restrictions of the underlying order $ \le_\# $). Notice that $ \le_3 $ satisfies the hypotheses of Lemma \ref{lem:2}, because:
\begin{enumerate}
\item If $ \lambda (u) < \lambda (v) $, then e.g. $ u \le_1 v $ (since $ \le_1 $ is a co-lexicographic order) and so $ u \le_3 v $;
\item If $ (u', u), (v, v) \in E $ are such that $ \lambda (u) = \lambda (v) $ and $ u \le_3 v $, then we have $ (u \le_1 v) \lor (u \le_2 v) $, which implies $ (u' \le_1 v') \lor (u' \le_2 v') $ (since $ \le_1 $ and $ \le_2 $ are co-lexicographic orders) and so $ u' \le_3 v' $.
\end{enumerate}
By Lemma \ref{lem:2} we can then conclude that $ \le_4 $ is a co-lexicographic order. However, $ \le_4 $ is a refinement of both $ \le_1 $ and $ \le_2 $, which are maximal, so $ \le_4 $ must be equal to both $ \le_1 $ and $ \le_2 $. The conclusion follows. \qed
\end{proof}

Let us present a constructive proof of Theorem \ref{theor2}, which can be used to build the maximal co-lexicographic order of a DFA.

\begin{theorem}\label{theor3}
Let $ \mathcal{A} = (Q, E, \Sigma, s, F) $ be a DFA, and let $ \le_\# $ be an underlying order of $ \mathcal{A} $. Let $ \le $ be the reflexive restriction of $ \le_\# $ such that for all states $ u, v \in Q $ with $ u <_\# v $ it holds $ u ~\|~ v $ if and only if for some $ r \ge 1 $ there exist states $ u_0, u_1, \dots, u_r $ and $ v_0, v_1, \dots, v_r $ with the following properties:
\begin{enumerate}
\item $ u_r = u $ and $ v_r = v $;
\item $ (u_k, u_{k + 1}), (v_k, v_{k + 1}) \in E $ for $ k = 0, 1, \dots, r - 1 $;
\item $ \lambda (u_k) = \lambda (v_k) $ for $ k = 1, 2, \dots, r $ (in particular, $ \lambda (u) = \lambda (v) $);
\item $ u_k <_\# v_k $ for $ k = 1, 2, \dots, r $;
\item $ v_0 <_\# u_0 $.
\end{enumerate}
Then, $ \le $ is a maximal co-lexicographic order of $ \mathcal{A} $. Moreover, $ \le $ is the unique maximal co-lexicographic order of $ \mathcal{A} $.
\end{theorem}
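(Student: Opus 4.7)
The plan is to identify the relation $\le$ defined in the statement of Theorem \ref{theor3} with the unique maximal co-lexicographic order $\le^*$ of $\mathcal A$ given by Theorem \ref{theor2}. Once $\le = \le^*$ is established, both the maximality and uniqueness assertions follow immediately.

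I would begin with the routine checks: $\le$ is reflexive and antisymmetric as a sub-relation of the total order $\le_\#$, and Axiom 1 holds because $\lambda(u) < \lambda(v)$ forces $u <_\# v$ by Lemma \ref{lem:underlying order DFA}, while any sequence witnessing $u \parallel v$ would require $\lambda(u_r) = \lambda(v_r)$, i.e. $\lambda(u) = \lambda(v)$, a contradiction. Next I would show the inclusion $\le_0 \subseteq \le$ for every co-lexicographic order $\le_0$ of $\mathcal A$, by backward propagation of Axiom 2 along a bad sequence: assuming $u <_0 v$ and using condition 4 of the statement (which guarantees $u_k \ne v_k$ for $k \ge 1$), a descending induction on $k$ yields $u_1 <_0 v_1$, and one further application of Axiom 2 forces $u_0 \le_0 v_0$, contradicting condition 5 together with the underlying-order property of $\le_\#$. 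In particular $\le^* \subseteq \le$.

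The main obstacle is the reverse inclusion $\le \subseteq \le^*$: one must show that if $u <_\# v$ and $u \parallel^* v$ in $\le^*$, then a bad sequence for $(u, v)$ must exist. The lever here is the maximality of $\le^*$: the transitive closure $\le^{**}$ of $\le^* \cup \{(u, v)\}$ is reflexive, antisymmetric (every newly added pair is consistent with $\le_\#$), and still satisfies Axiom 1, so maximality forces $\le^{**}$ to violate Axiom 2. Any such violation produces edges $(x', x), (y', y) \in E$ with $\lambda(x) = \lambda(y)$, $x <^{**} y$ and $x' \not\le^{**} y'$; a labelling argument combined with Axiom 2 of $\le^*$ forces $\lambda(x) = \lambda(y) = \lambda(u) = \lambda(v)$, together with $x \le^* u$ and $v \le^* y$. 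By choosing a minimal violation one can arrange $x = u$ and $y = v$; then either $y' <_\# x'$, yielding immediately a length-$1$ bad sequence, or $x' <_\# y'$ with $x' \parallel^* y'$, in which case the same maximality argument applied to the pair $(x', y')$ recursively produces a bad sequence that is then extended by the terminal edge step $(u', u), (v', v)$.

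Combining the two inclusions gives $\le = \le^*$, so $\le$ is a maximal co-lexicographic order of $\mathcal A$ and, by Theorem \ref{theor2}, the unique one. The most delicate point is the well-foundedness of the recursion in the previous paragraph: it rests on the DFA property (edges with equal labels to distinct targets originate from distinct sources, so each recursive step produces genuinely new predecessor pairs) and on a pigeonhole argument showing that an infinite recursion would force a cycle of mutually incomparable pairs in $\le^*$, contradicting maximality through a consistent orientation of the cycle that would preserve all axioms.
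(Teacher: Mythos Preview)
Your approach is genuinely different from the paper's. The paper proves \emph{directly} that $\le$ is a co-lexicographic order: reflexivity, antisymmetry, Axiom~1, Axiom~2, and---the hard step---transitivity, which it establishes by a backward induction along a witness sequence, showing that a putative failure $u<v$, $v<z$, $u\,\|\,z$ forces an intermediate state $v_k$ with $u_k<v_k<z_k$ all the way down to $k=0$, contradicting $z_0<_\# u_0$. Maximality and uniqueness then follow (as you also do) by showing that $u\,\|\,v$ in $\le$ forces $u\,\|\,v$ in every co-lexicographic order. You instead try to short-circuit the transitivity and Axiom~2 checks by identifying $\le$ with the order $\le^*$ supplied by Theorem~\ref{theor2}; your argument for $\le^*\subseteq\le$ is correct and essentially the paper's maximality step, and your reduction of an Axiom-2 violation in $\le^{**}$ to one located at $(u,v)$ is also correct (if the violation sits at $x<^*u$, Axiom~2 of $\le^*$ gives $x'\le^* u'$ for every predecessor $u'$ of $u$, and the violation propagates).

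The gap is in the termination of your recursion. If the sequence of pairs $(u_k,v_k)$ cycles, you propose to add all cycle pairs to $\le^*$ and claim the resulting transitive closure still satisfies Axiom~2. But Axiom~2 is universal over predecessor pairs: for each added pair $(a_k,b_k)$ you need $x'\le y'$ in the extended order for \emph{every} choice of edges $(x',a_k),(y',b_k)\in E$. Your recursion certifies this only for the single pair $(a_{k+1},b_{k+1})$ that it happened to select; other $\le^*$-incomparable predecessor pairs of $(a_k,b_k)$ need not lie on the cycle and are not added, so the extension may violate Axiom~2. (Adding \emph{all} $\le^*$-incomparable pairs oriented by $\le_\#$ does not help either: that just recovers $\le_\#$, which is not a co-lex order unless the DFA is Wheeler.) Repairing this seems to require tracking all predecessor pairs simultaneously---which is precisely the content of the paper's transitivity argument that your strategy was designed to avoid.
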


\begin{proof}
In order to prove that $ \le $ is a co-lexicographic order, we must prove that $ \le $ is a partial order that satisfies Axioms 1 and 2. First, let us prove that $ \le $ satisfies Axioms 1 and 2.
\begin{enumerate}
\item Let $ u, v \in Q $ be such that $ \lambda (u) < \lambda (v) $. By Lemma \ref{lem:underlying order DFA}, we have $ u <_\# v $. The definition of $ \le $ implies that $ u < v $.
\item Consider edges $ (u', u), (v', v) \in E $ such that $ \lambda (u) = \lambda (v) $, $ u < v $ (and so $ u <_\# v $) and $ u' \not = v' $. It cannot hold $ v' < u' $ (and so $ v' <_\# u') $ otherwise it should be $ u ~\|~ v $. Similarly, it cannot be $ u'~\|~v' $, otherwise both $ u' <_\# v' $ and $ v' <_\# u' $ would imply $ u ~\|~ v $. We conclude that it it must be $ u' < v' $.
\end{enumerate}

Second, let us prove that $ \le $ is a partial order. Reflexivity follows by definition, and antisymmetry is immediate because $ \le $ is a restriction of the total order $ \le_\# $. Let us prove transitivity.

Assume $ u < v $ and $ v < z $ . In particular $ u <_\# v $ and $ v <_\# z $, and so $ u <_\# z $. It must be $ \lambda(u) \le \lambda (v) \le \lambda (z) $, because for example $ \lambda (v) < \lambda (u) $ would imply $ v < u $ by Axiom 1, which again contradicts antisymmetry. If $ \lambda (u) < \lambda (z) $, then $ u < z $ by Axiom 1 and we are done. Otherwise, we have $ \lambda (u) = \lambda (z) $. Assume by contradiction that it is not true that $ u < z $. Since $ \le $ is a restriction of $ \le_\# $, it should be $ u ~\|~ z $. This means that there exist states $ u_0, u_1, \dots, u_r $ and $ z_0, z_1, \dots, z_r $ such that $ u_r = u $, $ v_r = v $, $ (u_k, u_{k + 1}), (z_k, z_{k + 1}) \in E $ for $ k = 0, 1, \dots, r - 1 $, $ \lambda (u_k) = \lambda (z_k) $ for $ k = 1, 2, \dots, r $, $ u_k <_\# z_k $ for $ k = 1, 2, \dots, r $ and $ z_0 <_\# u_0 $.

Let us prove that for every $ k = 1, \dots, r $ there exists a state $ v_k $ such that $ u_k < v_k $ and $ v_k < z_k $. We proceed by induction on $ h := r - k $. If $ h = 0 $, then just pick $ v_r := v $. Now assume that for $ k \ge 2 $ there exists $ v_k $ such that $ u_k < v_k $ and $ v_k < z_k $. We want to prove that there exists a state $ v_{k + 1} $ such that $ u_{k - 1} < v_{k - 1} $ and $ v_{k - 1} < z_{k - 1} $. Since $ \lambda (u_k) = \lambda (z_k) $, as usual we conclude $ \lambda (u_k) = \lambda (v_k) = \lambda (z_k) $. In particular, since only $ s $ has not incoming edges and $ u_k $, $ v_k $, $ z_k $ are distinct, then there exists a state $ v_{k - 1} $ such that $ (v_{k - 1}, v_{k}) \in E $. Since $ u_k < v_k $, $ v_k < z_k $ and $ \lambda (u_k) = \lambda (v_k) = \lambda (z_k) $, Axiom 2 implies $ u_{k - 1} \le v_{k - 1} $ and $ v_{k - 1} \le z_{k - 1} $. Let us prove that $ u_{k - 1} < v_{k - 1} $ and $ v_{k - 1} < z_{r - 1} $. If it were $ u_{k - 1} = v_{k - 1} $, then $ v_{k - 1} ~\|~ z_{k - 1} $ (a contradiction) because clearly it holds $ u_{k - 1} ~\|~ z_{k - 1} $. Similarly, assuming $ v_{k - 1} = z_{k - 1} $ leads to a contradiction. The proof by induction is then complete.

In particular, we know that there exists a state $ v_1 $ such that $ u_1 < v_1 $ and $ v_1 < z_1 $, with $ \lambda (u_1) = \lambda (v_1) = \lambda (z_1) $. Once again, there exist a state $ v_0 $ such that $ (v_0, v_1) \in E $, and Axiom 2 implies $ u_0 \le v_0 $ and $ v_0 \le z_0 $. Hence $ u_0 \le_\# v_0 \le_\# z_0 $, which contradicts $ z_0 <_\# u_0 $. The proof of transitivity is then complete.

Finally, $ \le $ let us prove that $ \le $ is a maximal co-lexicographic order and it is the unique maximal co-lexicographic order. To this end, it will suffice to prove that if $ u ~\|~ v $, with $ u <_\# v $, then $ u $ and $ v $ are not $ \le_1 $-comparable \emph{for every co-lexicographic order $ \le_1 $} (note that if $ u $ and $ v $ were $ \le_1 $-comparable, it should be $ u <_1 v $ because $ \le_1 $ must be a restriction of $ \le_\# $). We know that there exist states $ u_0, u_1, \dots, u_r $ and $ v_0, v_1, \dots, v_r $ such that $ u_r = u $, $ v_r = v $, $ (u_k, u_{k + 1}), (v_k, v_{k + 1}) \in E $ for $ k = 0, 1, \dots, r - 1 $, $ \lambda (u_k) = \lambda (v_k) $ for $ k = 1, 2, \dots, r $, $ u_k <_\# v_k $ for $ k = 1, 2, \dots, r $ and $ v_0 <_\# u_0 $. We proceed by induction on $ r $. If $ r = 0 $, and if it were $ u <_1 v $, then Axiom 2 would imply $ u_0 \le_1 v_0 $ (because $ \lambda (u) = \lambda (v) $ and $ (u_0, u), (v_0, v) \in E $), so $ u_0 \le_\# v_0 $, which contradicts $ v_0 <_\# u_0 $. Now let $ r \ge 1 $. Clearly $ u_{r - 1} ~\|~ v_{r - 1} $ and by the inductive step $ u_{r - 1} $ and $ v_{r - 1} $ are not $ \le_1 $-comparable. By Remark \ref{rem1}, we conclude that $ u $ and $ v $ are not $ \le_1 $ comparable. \qed
\end{proof}

From Theorem \ref{theor3} we easily derive a polynomial algorithm to build the unique maximal co-lexicographic order of a DFA.

\paragraph{\textbf{Statement of Theorem \ref{th:complexity co-lex DFA}}}
Let $ \mathcal{A} = (Q, E, \Sigma, s, F) $ be a DFA. We can find the unique maximal co-lexicographic order of $ \mathcal{A} $ in $ O(|E|^2) $ time.
\paragraph{\textbf{Proof}}
By Lemma \ref{lem:underlying order DFA}, $ \mathcal{A} $ admits an underlying order. Following \cite[Thm. 4]{alanko20regular}, in $ O(|E|) $ time we can build an underlying order $ \le_\# $ by prefix-sorting a directed spanning tree of $ \mathcal{A}$ with source $ s $. Consider the graph $ G = (V, F) $, where $ V  = \{(u, v)\ |(\lambda (u) = \lambda (v)) \land  (u <_\# v) \} $ 
and $ F = \{((u',v'),(u,v)) \in V\times V \ |\ (u',u),(v',v)\in E \}$.
Intuitively, we will use $G$ to propagate the incomparability relation $\|$ between pairs of states of $\mathcal A$. First, 
for all pairs of edges $ (u', u), (v', v) \in E $ such that $ \lambda (u) = \lambda (v) $, $ u <_\# v $, and $ v' <_\# u' $, mark node $ (u, v) $ of $G$. This process takes $O(|E|^2)$ time. Finally, mark all nodes reachable on $ G $ from marked nodes. This can be done with a simple DFS visit of $G$, initiating the stack with all marked nodes. Also this process takes $O(|E|^2)$ time. By Theorem \ref{theor3}, if we remove from $ \le_\# $ the set of all marked pairs of $V$ we obtain the maximal co-lexicographic order of $ \mathcal{A} $. 
\qed

\bibliographystyle{plainurl}% the mandatory bibstyle
\bibliography{regindex}

\end{document}